%

 
\documentclass[pdflatex,sn-mathphys-num]{sn-jnl}


\usepackage{graphicx}%
\usepackage{multirow}%
\usepackage{amsmath,amssymb,amsfonts}%
\usepackage{amsthm}%
\usepackage{mathrsfs}%
\usepackage[title]{appendix}%
\usepackage{xcolor}%
\usepackage{textcomp}%
\usepackage{manyfoot}%
\usepackage{booktabs}%
\usepackage{algorithm}%
\usepackage{algorithmicx}%
\usepackage{algpseudocode}%
\usepackage{listings}%
\usepackage{enumitem}%

\theoremstyle{thmstyleone}%
\newtheorem{theorem}{Theorem}[section]
%

\newtheorem{lemma}[theorem]{Lemma}

\newtheorem{assumption}{Assumption}

\theoremstyle{thmstyletwo}%

\theoremstyle{thmstylethree}%

\newtheorem*{remark}{Remark}%

\def\OO{\boldsymbol{\Omega}}

\raggedbottom

\begin{document}

\title{
Scalable and Calibrated Sampling for Bayesian Generalized Linear Mixed Model via Stochastic Gradient Markov Chain Monte Carlo
}

\author[1]{\fnm{Youngsoo} \sur{Baek}}\email{youngsoo.baek@duke.edu}

\author[2]{\fnm{Andrea} \sur{Agazzi}}\email{andrea.agazzi@unibe.ch}

\author[3]{\fnm{Felipe} \sur{A. Medeiros}}\email{fmedeiros@med.miami.edu}

\author*[1,4]{\fnm{Samuel} \sur{Berchuck}}\email{sib2@duke.edu}

\affil*[1]{\orgdiv{Department of Biostatistics \& Bioinformatics}, \orgname{Duke University}, \orgaddress{\street{2424 Erwin Rd}, \city{Durham}, \postcode{27705}, \state{NC}, \country{USA}}}

\affil[2]{\orgdiv{Institute for Mathematical Statistics and Actuarial Sciences}, \orgname{Universit{\"a}t Bern}, \orgaddress{\street{Alpeneggstrasse 22}, \city{Bern}, \postcode{3012}, \country{Switzerland}}}

\affil[3]{\orgdiv{Department of Ophthalmology}, \orgname{University of Miami}, \orgaddress{\street{900 NW 17th St}, \city{Miami}, \postcode{33136}, \state{FL}, \country{USA}}}

\affil*[1]{\orgdiv{Department of Statistical Science}, \orgname{Duke University}, \orgaddress{\street{214 Old Chemistry}, \city{Durham}, \postcode{27708}, \state{NC}, \country{USA}}}

\abstract{
Generalized linear mixed models (GLMMs) are widely used for analyzing correlated data, particularly in large-scale biomedical and social science applications. 
Scalable Bayesian inference for GLMMs is challenging due to an intractable marginal likelihood and a high computational cost incurred by conventional Markov chain Monte Carlo (MCMC) methods. 
We develop a stochastic gradient MCMC (SGMCMC) algorithm tailored to GLMMs that enables accurate posterior inference in the large-sample regime. 
Our approach uses Fisher’s identity to construct a (biased) Monte Carlo estimator of the gradient of the marginal log-likelihood, making SGMCMC feasible when direct gradient computation is impossible. 
We analyze the additional variability, introduced by both data subsampling and gradient approximation, to derive a post-hoc covariance correction that yields properly calibrated posterior uncertainty. 
We show through simulated studies that the proposed method provides accurate posterior means and variances in settings with a large number of groups, outperforming existing approaches, including control variate methods.
We further demonstrate the method’s practical utility in an analysis of electronic health records data, where accounting for variance inflation materially changes scientific conclusions.
}

\keywords{Scalable inference, Mixed models, Big data, Uncertainty quantification}

\pacs[MSC Classification]{62F15, 62R07, 62J05}

\maketitle

\section{Introduction}
\label{sec:intro}

\renewcommand*{\thefootnote}{\arabic{footnote}}

Generalized linear mixed models (GLMMs) are widely used to model dependent data with clustered or repeated measures, particularly in biomedical and social sciences where such structures are common \citep{agresti2012categorical, breslow1993approximate}. By incorporating subject-specific random effects, GLMMs account for within-group correlation and support inference on both population- and subject-level parameters.\footnote{In this work, we use the terms ``group'' and ``subject'' interchangeably to indicate a natural unit of nesting in the dataset. The latter term is more frequently used in longitudinal contexts, whereas the former term is generally used in diverse applications, including multi-cohort/center studies.} 
Bayesian inference is attractive for GLMMs as it offers coherent uncertainty quantification and flexible hierarchical modeling \citep{gelman1995bayesian}. However, scalable Bayesian inference is challenging due to the intractable marginal likelihood, which requires integrating over random effects. One could bypass this challenge with standard Markov chain Monte Carlo (MCMC) techniques that augments the parameter space with all group-specific random effects, but this approach becomes computationally infeasible when the number of groups grows to be large. 
Marginal likelihood-based approaches, including variational inference and stochastic gradient MCMC (SGMCMC), are scalable, but require evaluating or differentiating the marginal likelihood, which is intractable in GLMMs. 
Thus, GLMMs present a unique computational bottleneck: conditional methods are tractable but not scalable, and marginal methods are scalable but intractable. In this paper, we overcome the issue by introducing the first SGMCMC algorithm specifically tailored to GLMMs.

Most SGMCMC algorithms are based on discretizing continuous-time stochastic differential equations (SDEs) whose stationary distributions coincide with the target Bayesian posteriors \citep{nemeth2021stochastic}. 
The theory of Langevin diffusion forms the foundation for understanding sampling algorithms that approximate Bayesian inference through stochastic dynamics. A typical case is provided by the SGLD algorithm of \citet{welling2011bayesian}, which replaces full-data gradients in Langevin dynamics with stochastic gradients computed from minibatches. While asymptotic consistency has been established for vanishing step sizes \citep{teh2016consistency}, constant step sizes are often used in practice for better mixing and computational efficiency \citep{li2016preconditioned}. 
However, this inflates the posterior variance relative to the true posterior \citep{brosse2018promises}, highlighting the variance inflation issue present in many SGMCMC algorithms.
A number of variance correction approaches have been proposed, including preconditioning \citep{vollmer2016exploration, stephan2017stochastic}, stochastic gradient Hamiltonian Monte Carlo \citep{chen2014stochastic}, Riemannian manifold extensions \citep{patterson2013stochastic,ma2015complete} and control variate methods \citep{baker2019control}. Of these, only the control variate method remains computationally comparable to the uncorrected SGLD, but so far it has not been adapted to GLMMs.
We address this challenge by introducing a novel, computationally efficient post-hoc correction of the algorithmic posterior samples.
Our method essentially adapts an extended Kalman filter-like linearization technique frequently used in applied SDE modeling literature \cite{sarkka2019applied}.
We demonstrate through simulated studies that our method yields an accurate approximation of the posterior variance, while a naive use of control variates method in GLMMs can lead to both inflated and deflated variance estimates.

There are numerous methods to deal with intractable integrals appearing in GLMMs, including Laplace approximation \citep{tierney1989fully}, adaptive Gaussian-Hermite quadrature \citep{naylor1982applications} and penalized quasi-likelihood \citep{breslow1993approximate}. All of them often struggle in high-dimensional or non-Gaussian random effects settings. Monte Carlo techniques \citep{chib1995marginal, fruhwirth2004estimating} and EM-based variants \citep{booth1999maximizing} offer more flexibility, but can be computationally intensive and sensitive to tuning parameters. 
Bayesian methods, such as MCMC \citep{gelman1995bayesian}, INLA \citep{rue2017bayesian} and variational inference \citep{ormerod2012gaussian, tan2013variational}, are more scalable but still face challenges with complex random effects structures. 
To avoid directly evaluating the marginal likelihood, we leverage Fisher’s identity \citep{fisher1925theory}.
Previously used for variational inference in GLMMs \citep{tran2020bayesian}, Fisher's identity equates the gradient of the marginal likelihood to a certain expectation of random effects with respect to its conditional distribution given the data and population parameters. 
Our method is very closely related to the recent work that also used Fisher's identity to integrate out latent variables \cite{loaiza2024hybrid}; however, their focus was on models for which exact Gibbs samples of latent variables were used to compute an unbiased stochastic gradient.
Our work furnishes experimental and theoretical evidence for the more practical usage of biased stochastic gradient estimators, which are easily derivable based on an additional layer of MCMC approximation.

In this paper, we introduce the first SGLD-based algorithm tailored for scalable Bayesian inference in GLMMs. Our contributions are threefold: (i) a Monte Carlo gradient estimator based on Fisher’s identity that enables SGMCMC in the presence of intractable marginal likelihoods; (ii) an asymptotic correction that adjusts for variance introduced by both minibatching and gradient approximation; and (iii) empirical results demonstrating accurate uncertainty quantification and substantial speedups over MCMC with the full dataset. Comparison of our method with the control variate SGLD designed for independent data show that failing to account for the additional noise from likelihood approximation leads to biased uncertainty estimates, underscoring the need for methods built specifically for GLMMs.

\section{Background on Diffusion and GLMMs}

We consider observed data $\mathbf{Y} = \left({Y}_1,\ldots,{Y}_n\right)$ of size $n \in \mathbb N$, where $Y_i$ represents the $i$-th observation for $i \in [n] := \{1, 2, \dots , n\}$. For population parameter $\boldsymbol{\Omega} \in \mathbb{R}^{d}$, the posterior distribution is $\pi\left(\boldsymbol{\Omega}\right) := p\left(\boldsymbol{\Omega} | \mathbf{Y}\right) \propto p\left(\boldsymbol{\Omega}\right) \prod_{i=1}^n p\left({Y}_i|\boldsymbol{\Omega}\right),$ where $p\left(\boldsymbol{\Omega}\right)$ is the prior distribution for the parameters $\boldsymbol{\Omega}$, $p\left({Y}_i|\boldsymbol{\Omega}\right)$ is the likelihood function and $\propto$ denotes proportionality up to a constant. 
For notational convenience, we define $f_i\left(\boldsymbol{\Omega}\right) = -\log p\left({Y}_i | \boldsymbol{\Omega}\right)$ for all $i \in [n]$,  $f_0\left(\boldsymbol{\Omega}\right) = -\log p\left(\boldsymbol{\Omega}\right)$ and $f\left(\boldsymbol{\Omega}\right) = \sum_{i=0}^n f_i\left(\boldsymbol{\Omega}\right)$; 
in particular, the posterior is then $\pi\left(\boldsymbol{\Omega}\right) \propto \exp\left\{-f\left(\boldsymbol{\Omega}\right)\right\}.$ We define the gradient of the marginal log-likelihood $g_i(\boldsymbol{\Omega}) = \nabla f_i(\boldsymbol{\Omega})$. It follows that the gradient of $f$ is $\nabla f(\boldsymbol{\Omega}) = \nabla f_0(\boldsymbol{\Omega}) + \sum_{i=1}^n g_i(\boldsymbol{\Omega})$. We will use the notation $h_{\mathcal S}(\boldsymbol{\Omega}) = \frac{1}{S} \sum_{i \in \mathcal S} g_{i}(\boldsymbol{\Omega})$, where $\mathcal S$ is a random subset of $[n]$ such that $|\mathcal S| = S$. 
The gradient of $f$ can then be written as $\nabla f (\boldsymbol{\Omega}) = \nabla f_0(\boldsymbol{\Omega}) + n h_{[n]}(\boldsymbol{\Omega})$.

We also introduce the following standard notations. For two sequences $\{a_n\}_n$, $\{g_n\}_n$, $g_n = O_n(a_n)$ means that there exists $C>0$ for which $\|g_n\| \leq C a_n$ as $n\to \infty$. 
For a $d\times d$-dimensional real matrix $\mathbf M$, we define its operator norm as $\|\mathbf M\| = \sup_{\mathbf v\in \mathbb{R}^d : \|\mathbf v\|\leq1}\|\mathbf{Mv}\|$.
For higher-order tensors on $\mathbb{R}^d$, an operator norm is similarly defined: e.g., for a trilinear map $\mathbf{T}:\mathbb{R}^d\times\mathbb{R}^d\times\mathbb{R}^d\to\mathbb{R}$, we define $\|\mathbf{T}\| = \sup_{\mathbf{v}:\|\mathbf{v}\|\leq1}\mathbf{T}(\mathbf{v},\mathbf{v},\mathbf{v})$.

\subsection{Stochastic gradient Langevin dynamics}

The Langevin diffusion, $\boldsymbol{\Omega}_t$, is defined by the SDE,
\begin{equation}\label{e:sde0}
d\boldsymbol{\Omega}_t = -\nabla f\left(\boldsymbol{\Omega}_t\right)dt + \sqrt{2}d\mathbf{B}_t,
\end{equation}
where $\nabla f\left(\boldsymbol{\Omega}\right)$ denotes the gradient of $f$ in $\OO$ and $\mathbf{B}_t$ is a $d$-dimensional Brownian motion. The diffusion converges to the unique stationary distribution $\pi$ under mild conditions on $f$ \cite{roberts1996exponential}, in which case samples from $\pi$ can be obtained from stationary trajectories of \eqref{e:sde0}. 
In practice, Euler approximation can be used to discretize the stochastic process \eqref{e:sde0}.
With a fixed step size $\epsilon > 0$, the update scheme is given by
\begin{equation}\label{e:em0}
    \boldsymbol{\Omega}_{k+1} = \boldsymbol{\Omega}_k - \epsilon\nabla f\left(\boldsymbol{\Omega}_k\right) + \sqrt{2\epsilon}\boldsymbol{\eta}_k, \quad \boldsymbol{\eta}_k \stackrel{{iid}}{\sim} \mathcal N_d \left(\mathbf{0}_d, \mathbf{I}_d\right).
\end{equation}
The update of \eqref{e:em0} has a computational cost that scales linearly in $n$. In the regime  $n \gg 1$ (i.e., the big data regime of interest for this paper), this results in a prohibitive computational cost for the sampling process. 
To address the issue, Welling and Teh \cite{welling2011bayesian} introduced SGLD by replacing the full-data gradient $\nabla f\left(\boldsymbol{\Omega}\right)$ with an unbiased estimate $\nabla \hat{f}_{\rm SGLD}\left(\boldsymbol{\Omega}\right) = \nabla f_0\left(\boldsymbol{\Omega}\right) + \frac{n}{S}\sum_{i\in \mathcal S} \nabla  f_i\left(\boldsymbol{\Omega}\right) = \nabla f_0\left(\boldsymbol{\Omega}\right) + nh_{\mathcal S}(\OO)$, where  $\mathcal S$ is a random subset of $[n]$ drawn \emph{iid} at each iteration with $|\mathcal S| = S$. The minibatch estimator $h_{\mathcal S}(\OO)$ can easily be seen to be an unbiased estimator of the full-data gradient $h_{[n]}(\OO)$. When $S \ll n$, this significantly reduces the cost of the gradient computation. A single update of the SGLD algorithm is thus given by,
\begin{equation}\label{e:sgld}
  \boldsymbol{\Omega}_{k+1} = \boldsymbol{\Omega}_k - \epsilon\nabla \hat{f}_{\rm SGLD}\left(\boldsymbol{\Omega}_k\right) + \sqrt{2\epsilon}\boldsymbol{\eta}_k;\quad
  \epsilon>0.
\end{equation}
If the step size vanishes in the number of iterations, then SGLD permits sampling from the posterior $\pi$ up to a demanded accuracy \cite{welling2011bayesian}. However, practical implementations of SGLD tend to take a fixed step size to avoid excessively slow convergence. 
This implies the resulting posterior has an inflated variance \citep{brosse2018promises}. 

Besides the inevitable variance inflation, implementation of SGLD poses a different problem for GLMMs; namely, the  intractability of the marginal log-likelihood.
This implies that the natural stochastic gradient estimator $h_{\mathcal S}(\OO)$ cannot be straightforwardly evaluated.
We now introduce the background for GLMMs and describe how SGLD must be adapted to fit them.

\subsection{Generalized linear mixed model}
\label{sec:glmm_background}

Suppose each group/subject $i\in [n]$ has $n_i \in \mathbb N$ repeated measures, $Y_{it}$ for $t \in [n_i]$. Let $\mathbf{x}_{it}$ denote a $p$-dimensional vector of covariates, and $\mathbf{z}_{it}$ denote a $q$-dimensional vector of covariates that are assumed to have subject-specific parameters. The elements of $\mathbf{Y}_i = (Y_{i1},\ldots,Y_{in_i})^\top$ are modeled as conditionally independent random variables from the exponential family,
\begin{equation} \label{e:exp_family}
p(Y_{it} | \theta_{it}, \phi) = \exp\left\{\frac{Y_{it} \theta_{it} - b(\theta_{it})}{a(\phi)} + c(Y_{it}, {\phi})\right\},
\end{equation}
where $\theta_{it}=\theta(\mu_{it})$ is a canonical parameter of an exponential family member that is a smooth function of the conditional mean function $\mu_{it} = \mathbb{E}[y_{it}|\mathbf{x}_{it},\mathbf{z}_{it}]$ (from hereon, notation for conditioning on regressor features is suppressed).
A known smooth link function $g(\cdot)$ relates $\mu_{it}$ to the linear predictor $\eta_{it} = \mathbf{x}_{it}^\top\boldsymbol{\beta} + \mathbf{z}_{it}^\top\boldsymbol{\gamma}_i$ through the relation $\eta_{it} = g(\mu_{it})$.
$\boldsymbol{\beta}$ is a $p$-dimensional vector of population regression parameters (i.e., \emph{fixed effects}), and $\boldsymbol{\gamma}_i$ for each $i\in[n]$ is a $q$-dimensional vector of subject-specific parameter deviations from $\boldsymbol{\beta}$ (i.e., \emph{random effects}). 
We assume the common normal random effects model: $\boldsymbol{\gamma}_i \stackrel{{iid}}{\sim} \mathcal{N}_q(0, \boldsymbol{\Sigma})$, though this can be somewhat relaxed, as long as the distributional model satisfies our abstract assumptions in Section \ref{sec:theory}. 
$\phi > 0$ is a dispersion parameter, and $a(\cdot), b(\cdot),c(\cdot)$ are known smooth real-valued functions, with $a>0$. We group the population parameters and write $\boldsymbol{\Omega} = (\boldsymbol{\beta}, \boldsymbol{\Sigma}, \phi)$.
Table \ref{tab1:exp} summarizes the models discussed in this paper.

\begin{table}[b]
\centering
\caption{Examples of GLM observation likelihood specifications.}
\begin{tabular}{|l|c|c|c|c|} 
 \hline
 Model & Range of $Y$ & $\mu(\theta)=\mathbb{E}[Y|\theta]$ & Cumulant $b(\theta)$ & Canonical Link   \\ [0.5ex] 
 \hline\hline
 Gaussian & $\mathbb{R}$ & $\theta$ & 
 $\theta^2/2$ & Identity
 \\ 
 Binomial & $[M]$ & $Me^\theta/(1+e^\theta)$ & $M\log(1+e^\theta)$ & Logit
 \\
 Poisson & $\{0,1,2,\ldots,\}$ & $e^\theta$ & $e^\theta$ & Log
 \\
 \hline
\end{tabular}
\label{tab1:exp}
\end{table}

A canonical link function $g(\cdot)$ corresponds to the choice of choosing $\theta_{it} = \eta_{it}$. However, there are situations where different link functions are used. An example is a probit model for $Y\in[M]$, for which we set $\Phi(\eta_{it}) = \mu_{it}$ ($\Phi$ the standard normal distribution function). 
This induces the relation $\theta_{it} = \log\frac{\Phi(\eta_{it})}{1-\Phi(\eta_{it})}$.
Similarly, link functions other than the logarithmic one can be chosen for a Poisson model.
The use of different link functions can make a noticeable difference in the applicability of our theoretical analyses and a simpler, unbiased design of a stochastic gradient estimator.
These points are discussed in more detail in Sections \ref{sec:practical} and \ref{sec:theory}.

The posterior distribution is denoted by $p(\boldsymbol{\Omega} | \mathbf{Y})$ with $\mathbf{Y} = (\mathbf{Y}_1^\top,\ldots,\mathbf{Y}_n^\top)^{\top}$. 
It is proportional to the marginal joint likelihood times the prior, $\pi(\boldsymbol{\Omega}) := p(\boldsymbol{\Omega} | \mathbf{Y}) \propto p(\mathbf{Y} | \boldsymbol{\Omega})p(\boldsymbol{\Omega})$. We assume that the prior and likelihood decompose as $p(\boldsymbol{\Omega}) = p(\boldsymbol{\beta})p(\boldsymbol{\Sigma})p(\phi)$ and $p(\mathbf{Y}|\boldsymbol{\Omega}) = \prod_{i=1}^n p(\mathbf{Y}_i | \boldsymbol{\Omega})$. The subject-specific likelihood contribution is given by the following $q$-dimensional integral that in general is intractable, 
\begingroup
\allowdisplaybreaks
\begin{align}\label{eq:ml}
p(\mathbf{Y}_i | \boldsymbol{\Omega}) = \int \prod_{t = 1}^{n_i} p(Y_{it} | \boldsymbol{\beta}, \boldsymbol{\gamma}_i, \phi) p(\boldsymbol{\gamma}_i | \boldsymbol{\Sigma}) d\boldsymbol{\gamma}_i.    
\end{align}
\endgroup
The integrand function corresponding to the likelihood $p(\mathbf{Y}_i,\boldsymbol{\gamma}_i|\OO)$ is referred to as the ``complete likelihood.''
The integral arising in GLMMs for binary/count data is analytically intractable, due to the non-Gaussian structure of the random effects or the nonlinearity of the link function. 
The resulting computational issue is well-noted in both frequentist and Bayesian literature. 
A popular, default implementation for estimating the maximum likelihood in GLMMs approximates the integral \eqref{eq:ml} using a single modal estimate and numerically estimates its gradient \cite{lme4}.
Our approach, instead, is to use Fisher's identity \citep{fisher1925theory} and explicitly target the gradient of the integral with respect to $\OO$, given by
\begin{equation}
g_i(\boldsymbol{\Omega}) =\mathbb{E}_{\boldsymbol{\gamma}_i| \mathbf{Y}_i, \boldsymbol{\Omega}}\left[-\nabla \log p(\mathbf{Y}_i , \boldsymbol{\gamma}_i | \boldsymbol{\Omega})\right],\label{e:fisher}
\end{equation}
where $g_i(\boldsymbol{\Omega}) = \nabla f_i(\boldsymbol{\Omega}) = -\nabla\log p(\mathbf{Y}_i | \boldsymbol{\Omega})$. 
Fisher's identity re-writes the gradient of the marginal log-likelihood as the expected gradient of the complete log-likelihood with respect to the conditional distribution $p(\boldsymbol{\gamma}_i| \mathbf{Y}_i, \boldsymbol{\Omega})$.
Fisher’s identity has a long history in models with latent variables, including linear dynamic systems \citep{segal1989new}, non-linear and non-Gaussian state space models \citep{kantas2015particle}, and high-dimensional inverse problems \citep{jasra2021unbiased}. \citet{tran2020bayesian} applied it to GLMMs in a variational inference framework incorporating neural networks. Their method estimated conditional log-likelihood gradients via backpropagation and approximated expectations using importance sampling. 

Our aim is not variational inference, but designing a posterior sampling method, similar in spirit to Loaiza-Maya, Nibbering and Zhu \cite{loaiza2024hybrid}.
We introduce a Monte Carlo estimator for $g_i(\boldsymbol{\Omega})$ based on Fisher's identity and incorporate it into an SGMCMC algorithm that permits a sampling-based Bayesian inference for large datasets. When used with independent Monte Carlo samples to approximate $g_i(\boldsymbol{\Omega})$, our approach yields unbiased stochastic gradients of the marginal log-likelihood, enabling SGLD updates for models with intractable marginal gradients.
More realistically, even when the samples are produced by a biased MCMC algorithm, we provide assumptions under which theoretical analyses theoretical analyses can provide rigorous guarantees on the sampler accuracy. Experimental results support the validity of the resulting approach.

\section{SGMCMC for GLMMs}

We now introduce our SGMCMC algorithm for scalable Bayesian inference in GLMMs. 
Our approach builds on Fisher’s identity to construct a Monte Carlo estimator of the gradient of the marginal log-likelihood, enabling the use of SGLD in this otherwise intractable setting. 
The SGLD algorithm \cite{welling2011bayesian} given in \eqref{e:sgld} approximates the full-data gradient, $\nabla f\left(\boldsymbol{\Omega}\right) = \nabla f_0(\boldsymbol{\Omega}) + n h_{[n]}(\boldsymbol{\Omega})$, with an unbiased estimator $\nabla \hat{f}_{\rm SGLD}\left(\boldsymbol{\Omega}\right) = \nabla f_0(\boldsymbol{\Omega}) + n h_{\mathcal S}(\boldsymbol{\Omega})$. The estimator from \cite{welling2011bayesian} relies on a vanishing step size and the fact that $\nabla  f_i\left(\boldsymbol{\Omega}\right)$ can be computed. In this section, we consider the following SGLD algorithm for GLMMs:
\begin{equation} \label{e:estimator}
\nabla \hat{f}_{\rm GLMM}\left(\boldsymbol{\Omega}
;\mathcal{S}
\right) = \nabla f_0\left(\boldsymbol{\Omega}\right) + \frac{n}{S}\sum_{i\in \mathcal S} \hat g_i\left(\boldsymbol{\Omega}\right) = \nabla f_0\left(\boldsymbol{\Omega}\right) + n \hat{h}_{\mathcal S}(\boldsymbol{\Omega}).
\end{equation}
Our estimator is a function of a ``population estimator'' $\hat{h}_{\mathcal S}(\boldsymbol{\Omega}) = \frac{1}{S}\sum_{i\in \mathcal S} \hat g_i(\boldsymbol{\Omega})$, which is a minibatch average of the subject-specific gradient estimators, $\hat g_i\left(\boldsymbol{\Omega}\right)$. Recall that $g_i(\boldsymbol{\Omega}) = \nabla f_i(\boldsymbol{\Omega}) = -\nabla\log p(\mathbf{Y}_i | \boldsymbol{\Omega})$ is the gradient of the marginal log-likelihood for subject $i$, and in the GLMM setting must be approximated. In Section \ref{sec:estimator}, we define an estimator for $g_i\left(\boldsymbol{\Omega}\right)$ and derive its statistical properties. We also characterize the population estimator $\hat{h}_{\mathcal S}(\boldsymbol{\Omega})$, show that it is an unbiased estimator of ${h}_{[n]}(\boldsymbol{\Omega})$, and introduce an estimator for its covariance. In Section \ref{sec:correction}, we derive a post-hoc correction for the posterior samples obtained from the algorithm in \eqref{e:estimator}.
In Section \ref{sec:practical}, we detail the practical implementation of the algorithm, providing a step-by-step procedure for inference.
A more detailed error analysis of the resulting algorithm and its post-hoc correction is postponed to Section \ref{sec:theory}. 

\subsection{Deriving an estimator for the marginal likelihood} \label{sec:estimator}

We begin by defining the estimator for $g_i(\boldsymbol{\Omega})$. Using \eqref{e:fisher}, we re-write $g_i(\boldsymbol{\Omega})$ as an expectation with respect to the conditional distribution of a random effect $\boldsymbol{\gamma}_i$. Equation \eqref{e:fisher} can be decomposed into two terms involving the conditional likelihood and the random effects density, respectively:
\begin{equation}
g_i(\boldsymbol{\Omega}) =\mathbb{E}_{\boldsymbol{\gamma}_i| \mathbf{Y}_i, \boldsymbol{\Omega}}\left[-\nabla \log p(\mathbf{Y}_i | \boldsymbol{\gamma}_i , \boldsymbol{\beta},\phi)-\nabla \log p(\boldsymbol{\gamma}_i | \boldsymbol{\Sigma})\right].\label{e:fisher2}
\end{equation}
Both the conditional likelihood, $ p(\mathbf{Y}_i | \boldsymbol{\gamma}_i , \boldsymbol{\beta},\phi)$, and the random effects distribution, $p(\boldsymbol{\gamma}_i | \boldsymbol{\Sigma})$, are members of the exponential family and are differentiable under mild conditions. Note that the density of the random effects distribution is not required to be a member of the exponential family, but only to have a closed-form density that is differentiable. 

The representation of the gradient of the marginal log-likelihood as an expectation in \eqref{e:fisher2} motivates a Monte Carlo estimator for $g_i(\boldsymbol{\Omega})$, as in \cite{tran2020bayesian}. 
We define our Monte Carlo estimator based on $R$ samples of $\boldsymbol{\gamma}_{i}$ as,
\begin{equation}\label{e:hatgi}
\hat g_i(\boldsymbol{\Omega}) = -\frac{1}{R}\sum_{r = 1}^R \nabla \log p(\mathbf{Y}_i, \boldsymbol{\gamma}_{ir} | \boldsymbol{\Omega}), \quad \boldsymbol{\gamma}_{ir} \stackrel{{iid}}{\sim} p(\boldsymbol{\gamma}_i| \mathbf{Y}_i, \boldsymbol{\Omega}).
\end{equation}
For the moment, let us assume that we can draw \emph{iid} Monte Carlo samples from the underlying conditional distribution $p(\boldsymbol{\gamma}_i|\mathbf{Y}_i,\boldsymbol{\Omega})$.
The impact of using biased sampling methods or other approximations of $p(\boldsymbol{\gamma}_i|\mathbf{Y}_i,\boldsymbol{\Omega})$, and the corresponding posterior variance, will be addressed in Section \ref{sec:practical}.
The stochastic gradient estimator from \eqref{e:hatgi} is now straightforward to compute, since the gradient, $\nabla \log p(\mathbf{Y}_i, \boldsymbol{\gamma}_{ir} | \boldsymbol{\Omega})$, is analytically available. The covariance of the Monte Carlo estimator \eqref{e:hatgi} is defined as
\begin{equation}\label{e:psii}
\boldsymbol{\Psi}_i(\boldsymbol{\Omega}) = \mathbb{V}_{\boldsymbol{\gamma}_i | \mathbf{Y}_i,\boldsymbol{\Omega}}\left(\hat g_i(\OO)\right) = \frac{1}{R}\mathbb{V}_{\boldsymbol{\gamma}_i | \mathbf{Y}_i,\boldsymbol{\Omega}}\left(\nabla \log p(\mathbf{Y}_i,\boldsymbol{\gamma}_i | \boldsymbol{\Omega})\right)\,,
\end{equation}
where $\mathbb V_{\mathbf X | \mathbf Y}(f(\mathbf X))$ is the conditional covariance of $f(\mathbf X)$ given $\mathbf Y$. The associated empirical covariance is given by
\begin{equation} \label{e:psiihat}
\hat{\boldsymbol{\Psi}}_i(\boldsymbol{\Omega}) = \frac{1}{R (R-1)}\sum_{r = 1}^R \left(\nabla \log p(\mathbf{Y}_i , \boldsymbol{\gamma}_{ir}| \boldsymbol{\Omega}) -\hat g_i(\boldsymbol{\Omega})\right)\left(\nabla \log p(\mathbf{Y}_i , \boldsymbol{\gamma}_{ir}| \boldsymbol{\Omega}) -\hat g_i(\boldsymbol{\Omega})\right)^\top\,.
\end{equation}
For each $i \in [n]$, $\hat g_i(\boldsymbol{\Omega})$ \eqref{e:hatgi} and $\hat{\boldsymbol{\Psi}}_i(\boldsymbol{\Omega})$ \eqref{e:psiihat} are unbiased estimators of $g_i(\boldsymbol{\Omega})$ \eqref{e:fisher2} and $\boldsymbol{\Psi}_i(\boldsymbol{\Omega})$ \eqref{e:psii}, respectively.
Based on the estimator of $g_i(\boldsymbol{\Omega})$, we can consider the statistical properties of the minibatch estimate of the \emph{population} gradient, $\hat h_{\mathcal S}(\boldsymbol{\Omega})$. 

\begin{lemma} \label{lemma:grad}
For every $\boldsymbol{\Omega} \in \mathbb R^d$, $\hat h_{\mathcal{S}}(\boldsymbol{\Omega})$ is an unbiased estimator of $h_{[n]}(\boldsymbol{\Omega})$ with covariance $S^{-1} \boldsymbol \Psi(\OO)$ and $\boldsymbol{\Psi}(\boldsymbol{\Omega}) = \frac{1}{n}\sum_{i=1}^n \left(g_i\left(\boldsymbol{\Omega}\right) - {h}_{[n]}(\boldsymbol{\Omega})\right)\left(g_i\left(\boldsymbol{\Omega}\right) - h_{[n]}(\boldsymbol{\Omega})\right)^\top + \frac{1}{n}\sum_{i=1}^{n}\boldsymbol{\Psi}_i(\boldsymbol{\Omega}).$
Furthermore,
\begin{align*}
\hat{\boldsymbol{\Psi}}(\boldsymbol{\Omega}) &:= \frac{1}{n}\sum_{i=1}^n \left(\hat{g}_i\left(\boldsymbol{\Omega}\right) - \hat{{h}}_{[n]}(\boldsymbol{\Omega})\right)\left(\hat{g}_i\left(\boldsymbol{\Omega}\right) - \hat{{h}}_{[n]}(\boldsymbol{\Omega})\right)^\top + \frac{\hat{\boldsymbol{\Psi}}_i(\boldsymbol{\Omega})}{n},\\
&= \frac{1}{n}\sum_{i=1}^n \left(\hat{g}_i\left(\boldsymbol{\Omega}\right) - \hat{{h}}_{[n]}(\boldsymbol{\Omega})\right)\left(\hat{g}_i\left(\boldsymbol{\Omega}\right) - \hat{{h}}_{[n]}(\boldsymbol{\Omega})\right)^\top + \frac{1}{n^2}\sum_{i=1}^n\hat{\boldsymbol{\Psi}}_i(\boldsymbol{\Omega}),
\end{align*}
is an unbiased estimator of $\boldsymbol{\Psi}(\boldsymbol{\Omega})$, where $\hat{{h}}_{[n]}(\boldsymbol{\Omega}) = \frac{1}{n}\sum_{i=1}^n \hat{g}_i\left(\boldsymbol{\Omega}\right)$.
\end{lemma}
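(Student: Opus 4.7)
The plan is to prove each claim by conditioning sequentially on the minibatch indices $\mathcal S$ and on the Monte Carlo draws within each subject, invoking Lemma~\ref{lemma:mc} for the within-subject expectations and covariances, and iterating expectation via the tower property and the law of total covariance. The key structural fact is that conditional on $\mathcal S$ and $\OO$, the estimators $\hat g_i(\OO)$ for distinct $i\in\mathcal S$ are independent, because their Monte Carlo draws $\boldsymbol\gamma_{ir}$ are generated separately across subjects.

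For the unbiasedness of $\hat h_{\mathcal S}(\OO)$, I would first apply the tower property to obtain
\[
\mathbb E[\hat h_{\mathcal S}(\OO)] = \mathbb E_{\mathcal S}\!\left[\tfrac{1}{S}\sum_{i\in\mathcal S}\mathbb E[\hat g_i(\OO)\mid\mathcal S]\right] = \mathbb E_{\mathcal S}\!\left[\tfrac{1}{S}\sum_{i\in\mathcal S} g_i(\OO)\right] = h_{[n]}(\OO),
\]
using Lemma~\ref{lemma:mc} and the uniform sampling of $\mathcal S$. For the covariance, the law of total covariance gives two contributions. The conditional Monte Carlo covariance is $S^{-2}\sum_{i\in\mathcal S}\boldsymbol\Psi_i(\OO)$ by conditional independence, which in expectation over $\mathcal S$ equals $(nS)^{-1}\sum_{i=1}^n\boldsymbol\Psi_i(\OO)$. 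The covariance of the conditional mean $S^{-1}\sum_{i\in\mathcal S}g_i(\OO)$ under i.i.d.\ sampling of $\mathcal S$ equals $(nS)^{-1}\sum_{i=1}^n(g_i(\OO) - h_{[n]}(\OO))(g_i(\OO) - h_{[n]}(\OO))^\top$. Summing the two contributions yields $S^{-1}\boldsymbol\Psi(\OO)$.

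The unbiasedness of $\hat{\boldsymbol\Psi}(\OO)$ is the substantive step and the main obstacle. I would write $\hat g_i(\OO) = g_i(\OO) + \varepsilon_i$ with $\mathbb E[\varepsilon_i]=\mathbf 0$, $\mathbb V(\varepsilon_i)=\boldsymbol\Psi_i(\OO)$, and the $\varepsilon_i$ independent across $i$, so that $\hat g_i - \hat{h}_{[n]} = (g_i - h_{[n]}) + (\varepsilon_i - \bar\varepsilon)$ with $\bar\varepsilon = n^{-1}\sum_j\varepsilon_j$. The cross terms in the outer product vanish in expectation because $\mathbb E[\varepsilon_i]=\mathbf 0$, and a short calculation using the independence of the $\varepsilon_j$ gives
\[
\mathbb E\bigl[(\varepsilon_i - \bar\varepsilon)(\varepsilon_i - \bar\varepsilon)^\top\bigr] = \boldsymbol\Psi_i(\OO) - \tfrac{2}{n}\boldsymbol\Psi_i(\OO) + \tfrac{1}{n^2}\sum_{j=1}^n\boldsymbol\Psi_j(\OO).
\]
Averaging over $i$, the expected value of the first summand of $\hat{\boldsymbol\Psi}(\OO)$ becomes $\tfrac{1}{n}\sum_i(g_i - h_{[n]})(g_i - h_{[n]})^\top + \tfrac{n-1}{n^2}\sum_i\boldsymbol\Psi_i(\OO)$. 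Adding $\mathbb E[n^{-2}\sum_i\hat{\boldsymbol\Psi}_i(\OO)] = n^{-2}\sum_i\boldsymbol\Psi_i(\OO)$ from Lemma~\ref{lemma:mc} collapses the coefficients via $\tfrac{n-1}{n^2} + \tfrac{1}{n^2} = \tfrac{1}{n}$ and recovers $\boldsymbol\Psi(\OO)$ exactly. The bookkeeping required to see that the $n^{-2}$ (rather than $n^{-1}$) scaling of $\sum_i\hat{\boldsymbol\Psi}_i$ is precisely the Bessel-type correction needed to cancel the deficit from the first summand is the only non-routine part of the argument; everything else is a direct application of the tower property and Lemma~\ref{lemma:mc}.
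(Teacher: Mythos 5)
Your proposal is correct and follows essentially the same route as the paper's own argument: unbiasedness via the tower property, the covariance $S^{-1}\boldsymbol{\Psi}(\boldsymbol{\Omega})$ via the law of total covariance over the minibatch and the independent within-subject Monte Carlo noise, and unbiasedness of $\hat{\boldsymbol{\Psi}}(\boldsymbol{\Omega})$ by expanding $\hat g_i - \hat h_{[n]} = (g_i - h_{[n]}) + (\varepsilon_i - \bar\varepsilon)$ and observing that the $n^{-2}\sum_i \hat{\boldsymbol{\Psi}}_i$ term exactly cancels the $\tfrac{n-1}{n^2}$-versus-$\tfrac{1}{n}$ deficit. The computations, including the key identity $\mathbb E[(\varepsilon_i-\bar\varepsilon)(\varepsilon_i-\bar\varepsilon)^\top]=\boldsymbol{\Psi}_i-\tfrac{2}{n}\boldsymbol{\Psi}_i+\tfrac{1}{n^2}\sum_j\boldsymbol{\Psi}_j$, are accurate, so no gaps remain.
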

\begin{proof}
See Section 1 of the online supplementary materials \citep{oursupplement}.
\end{proof}

The covariance of $\hat h_{\mathcal{S}}(\boldsymbol{\Omega})$ is given by $S^{-1}\boldsymbol{\Psi}(\boldsymbol{\Omega})$ and becomes smaller with increases in minibatch size ($S\rightarrow n$). 
It can be empirically estimated by $\hat{\boldsymbol{\Psi}}(\OO)$ as above. We decomposed it to reeal that for large $n$, contributions from group-specific, $R$-dependent Monte Carlo covariances \eqref{e:psii} become negligible.

Having defined our estimators, we can succinctly re-write the update equation as
\begin{align}
\boldsymbol{\Omega}_{k+1} &= \boldsymbol{\Omega}_k - \epsilon \nabla \hat{f}_{\rm GLMM}(\boldsymbol{\Omega};
\mathcal{S}_k) + \sqrt{2 \epsilon} \boldsymbol{\eta}_k.\label{e:oursgld}
\end{align}

\subsection{Variance correction for SGLD with an Extended Kalman Filter-Like Linearization} \label{sec:correction}

We can use the statistical characterization of the gradient estimators in Section \ref{sec:estimator} to rewrite our SGLD algorithm, exposing the noise added by approximations. By adding and subtracting the full-data gradient to \eqref{e:oursgld} we obtain,
\begin{align}
\boldsymbol{\Omega}_{k+1} = \boldsymbol{\Omega}_k-\epsilon \nabla f(\boldsymbol{\Omega}_k) + \sqrt{2\epsilon}\left( \sqrt{\frac{\epsilon n^2}{2S}}\boldsymbol{\xi}(\boldsymbol{\Omega}_k ; \mathcal{S}_k) + \boldsymbol{\eta}_k \right),\label{e:sde2}
\end{align}
where $\boldsymbol \xi(\boldsymbol{\Omega} ; \mathcal{S})$ is a set of random variables independent of $\boldsymbol \eta$ with mean zero and covariance $\mathbb V(\boldsymbol \xi(\boldsymbol{\Omega})) = \boldsymbol{\Psi}(\OO)$. The sum of the two independent variance terms is given by, $\boldsymbol{\Gamma}(\OO) = \epsilon n^2\boldsymbol{\Psi}(\boldsymbol{\Omega}) / (2S) + \mathbf{I}_d$. Note that \eqref{e:sde2} is identical to the Euler discretization given in \eqref{e:em0}, with the approximation noise isolated to the $\boldsymbol \xi(\boldsymbol{\Omega}; \mathcal{S})$ term. This form of the algorithm update provides clues on how to reduce the variance inflation. 
Naively, we can impose either $\epsilon \approx n^{-2}$ or $S \approx n$, but these choices cancel out any computational benefits of subsampling the data. We hereby instead introduce an approach for correcting the posterior variance estimate based on asymptotic analysis.
The linearization technique we describe below is frequently used in applied SDE modeling literature and is essentially an extended Kalman filter method for a discretized SDE evolution; see, e.g., \cite[Sections~9,~10.6--7]{sarkka2019applied}.
The asymptotic validity of the correction will be investigated in more detail in Section \ref{sec:theory}.

We start by approximating the discrete-time update in \eqref{e:sde2} with a forward Euler scheme update of matching mean and covariance,
\begin{equation}
\boldsymbol{\Omega}_{k+1}- \boldsymbol{\Omega}_k= -\epsilon \nabla f(\boldsymbol{\Omega}_k) + \sqrt{2\epsilon}\boldsymbol{\eta}'_k, \quad \boldsymbol{\eta}'_k \sim \mathcal N\left(\mathbf{0},\boldsymbol{\Gamma}(\OO_k)\right),
\end{equation}
where $\boldsymbol{\Gamma}(\OO) = \epsilon n^2\boldsymbol{\Psi}(\boldsymbol{\Omega}) / (2S) + \mathbf{I}_d$. The update approximates the evolution of the SDE, $d \boldsymbol{\Omega}_t = - \nabla f(\boldsymbol{\Omega}_t) dt + \sqrt 2 \sigma(\boldsymbol{\Omega}_t) d\mathbf{B}_t$ for $\sigma(\OO)\sigma(\OO)^\top = \boldsymbol{\Gamma}(\OO)$. By the law of large numbers, we expect $n^{-1}\nabla f(\OO) $ to converge to an $n$-independent function of $\OO$ so that $\nabla f(\OO) = \Omega_n(n)$ for $\OO \neq \OO^*$. Consequently, the distribution of $\OO_t$ will concentrate around the minimizer $\OO^*$ of $f$ justifying the linearization of the SDE approximation of the update around $ \boldsymbol{\Omega}^*$,
$$d (\OO_t - \OO^*) = - \nabla^2 f(\boldsymbol{\Omega}^*)(\OO_t - \OO^*) dt + \sqrt 2 \sigma(\OO^*) d\mathbf{B}_t\,.$$
The invariant measure of this linear equation can be explicitly computed as $\bar \pi(\OO) = Z^{-1} \exp\left\{ - (\OO - \OO^*)^\top\boldsymbol \Sigma^{-1}(\OO - \OO^*)/2\right\},$ where $\boldsymbol \Sigma$ satisfies the Lyapunov equation, 
\begin{equation}
\mathbf{A} \boldsymbol{\Sigma}  + \boldsymbol{\Sigma} \mathbf{A}^\top = 2\boldsymbol{\Gamma},
\label{e:2}
\end{equation}
for $\mathbf{A} = \nabla^2 f(\boldsymbol{\Omega}^*)$ and $\boldsymbol{\Gamma} = \epsilon n^2\boldsymbol{\Psi}(\boldsymbol{\Omega}^*) / (2S) + \mathbf{I}_d$. Since both $\boldsymbol \Sigma$ and $\mathbf A$ are symmetric,  \eqref{e:2} can be written as $\boldsymbol{\Sigma} \mathbf{A}  + \mathbf{A} \boldsymbol{\Sigma} = 2 \boldsymbol{\Gamma}$.
Solving this equation for $\mathbf{A}$, where $\boldsymbol\Sigma, \boldsymbol{\Gamma}$ are estimated from SGLD samples, yields an estimate of the desired covariance.

\begin{remark}
  Whenever $\epsilon/S = O_n(n^{-(2+\delta)})$ for a chosen $\delta>0$, contribution from a diffusive Gaussian per each iteration is negligible with respect to contribution from the variance of the stochastic gradient estimator. 
  This ``low variance'' regime can result from either $\epsilon \ll n^{-2}$ and $S$ is a fixed constant (tiny steps), or $\epsilon = O_n(n^{-(1+\delta)})$ and $1/S = O_n(1/n)$ (full data).
  In these two regimes, SGLD is expected to obey dynamics close to that of a fixed-step SGD.
  In the sequel, we will be mainly interested in the ``moderate variance'' regime where $\epsilon = O_n(n^{-(1+\delta)})$ and $S$ is a fixed constant in $n$.
  The choice of $\delta > 0$ will be necessary, as it can be shown that already for simpler models, the choice of $\epsilon = c/n$ for any fixed $c>0$ results in an irreducible variance of the stochastic gradient estimator; see \cite[Theorem~7]{brosse2018promises}.
\end{remark}

\subsection{ Practical implementation of the algorithm}
\label{sec:practical}

\subsubsection*{Description of the Correction Algorithm}

Using the SGLD estimator from Section \ref{sec:estimator} and the variance correction from Section \ref{sec:correction}, we describe how to obtain posterior samples from our algorithm with a proper variance. We begin by solving the Lyapunov equation in \eqref{e:2}. Recall that $\mathbf{A}^{-1} = \nabla^2 f(\OO^*)$ for the maximum likelihood estimator (MLE) $\OO^*$ and $\mathbf{\Gamma}$ represents the injected noise of the SDE at $\OO = \OO^*$. 
We approximate the unknown quantities $\boldsymbol{\Sigma}$ and $\OO^*$ by the covariance and mean of the posterior MCMC samples, respectively.
We then define the Cholesky decomposition for the uncorrected and corrected covariance as, $\boldsymbol{\Sigma} = \mathbf{E}^\top \mathbf{E}$ and $\mathbf{A} = \mathbf{F}^\top \mathbf{F}$. The corrected samples can be computed as follows,
\begin{equation} \label{eq:correction}
\boldsymbol{\Theta}_k = \mathbf{G} \left(\boldsymbol{\Omega}_k - \boldsymbol{\Omega}^*\right) + \boldsymbol{\Omega}^*, \quad \mathbf{G} = \left(\mathbf{E}^\top \mathbf{F}\right)^{-1}.
\end{equation}
The samples $\boldsymbol{\Theta}_k$ then have, asymptotically, the proper mean and covariance (i.e., $\mathbb{V}\left(\boldsymbol{\Theta}_k\right) = \mathbf{A}^{-1}$). 
Solving a general Lyapunov equation of form \eqref{e:2} with $d\times d$ matrices costs $O_d(d^3)$ operations, though improvements are possible if the involved matrices have special structures.
The algorithm is detailed in Algorithm \ref{alg}.

To compute the stochastic gradient estimator, we need to evaluate the gradient of the complete log-likelihood, $\nabla \log p(\mathbf{Y}_i, \boldsymbol{\gamma}_{ir} | \boldsymbol{\Omega})$, with respect to $\OO$.
The gradient takes a simple form due to the common structure of GLM likelihoods described in Section \ref{sec:glmm_background}. 
For simplicity, we focus on the canonical link obtained by setting $\boldsymbol{\theta} = \boldsymbol{\eta}$.
The gradient with respect to $\boldsymbol{\beta}$ is then 
\begin{align} \label{eq:betagrad}
\nabla_{\boldsymbol{\beta}} \log p(\mathbf{Y}_i , \boldsymbol{\gamma}_{ir} | \boldsymbol{\Omega}) 
= \frac{\mathbf{X}_i^\top \left(\mathbf{Y}_i - \boldsymbol{\mu}_{ir}\right)}{a(\phi)},
\end{align}
where $\mathbf{X}_i = (\mathbf{x}_{i1},\ldots,\mathbf{x}_{in_i})^\top$ is an $n_i \times p$ dimensional matrix of covariates and $\boldsymbol{\mu}_{ir} = (\mu_{ir1}, \ldots, \mu_{irn_i})^\top$ is an $n_i$-dimensional vector. 
The mean function $\mu_{irt} = \mathbb{E}[y_{it}|\theta_{irt}]$ depends on $\boldsymbol{\gamma}_{ir}$ through the relation $\theta_{irt} = \mathbf{x}_{it}^\top\boldsymbol{\beta} + \mathbf{z}_{it}^\top \boldsymbol{\gamma}_{ir}$. 
\eqref{eq:betagrad} can be derived based on the properties of the cumulant generating function $b(\theta_{irt})$, namely, that its first two derivatives yield the mean and the variance of the distribution.
Similarly, the gradient with respect to the dispersion parameter is given by
\begingroup
\allowdisplaybreaks
\begin{align} \label{eq:dispersiongrad}
\nabla_{\phi} \log f(\mathbf{Y}_i , \boldsymbol{\gamma}_{ir}| \boldsymbol{\Omega}) 
= -\sum_{t = 1}^{n_i}\left[\frac{Y_{it}\theta_{irt} - b(\theta_{irt})}{a'(\phi)a(\phi)^2}\right] + c'(Y_{it}, \phi),
\end{align}
\endgroup
where $a'(\phi)$ and $c'(Y_{it},\phi)$ are derivatives with respect to $\phi$. 
The gradient for $\boldsymbol{\Sigma}$ reduces to $\nabla_{\boldsymbol{\Sigma}} \log f(\mathbf{Y}_i , \boldsymbol{\gamma}_{ir}| \boldsymbol{\Omega}) = \nabla_{\boldsymbol{\Sigma}} \log f(\boldsymbol{\gamma}_{ir} | \boldsymbol{\Sigma})$.
Explicit expressions of the gradients required for handling binomial, Gaussian and Poisson regressions are derived in Sections 3--4 of the online supplementary materials \citep{oursupplement}.

\begin{algorithm}[t]
\caption{SGLD with Covariance Correction for GLMM}\label{alg}
\begin{algorithmic}
\Require{$\OO_0, S, \delta, K, R,$}
\State Define $\epsilon = S/n^{1+\delta}$
\For{$k \in 1,\ldots,K$}
    \State Draw $\mathcal{S}_{k} \sim \mathcal S$
    \For{$i \in \mathcal S_k$}
        \State Initialize: $\boldsymbol{\gamma}_{i0}$
        \For{$r \in 1,\ldots,R$}
        \State Sample $\boldsymbol{\gamma}_{ir}$ from $p(\boldsymbol{\gamma}_{i} | \mathbf{Y}_i,\boldsymbol{\Omega}_k)$
        \State Compute $\nabla_{\boldsymbol{\Omega}} \log p(\mathbf{Y}_i,\boldsymbol{\gamma}_{ir} | \boldsymbol{\Omega}_k)$
        \EndFor
        \State Compute $\hat g_i(\boldsymbol{\Omega}_k) = \frac{1}{R} \sum_{r=1}^R \nabla_{\boldsymbol{\Omega}} \log p(\mathbf{Y}_i,\boldsymbol{\gamma}_{ir} | \boldsymbol{\Omega}_k)$
        \Comment{See also \eqref{e:mcmcgi}}
    \EndFor
    \State Compute $\hat{h}_{\mathcal S_k}(\boldsymbol{\Omega}_k) = \frac{1}{S} \sum_{i \in \mathcal S_k} \hat g_i(\boldsymbol{\Omega}_k)$
    \State Draw $\boldsymbol{\eta}_k \sim N(0, 2\epsilon \mathbf{I}_d)$
    \State $\OO_{k+1} \gets \OO_k - \epsilon \left(\nabla f_0(\OO_k) + n\hat{h}_{\mathcal{S}_{k}}(\OO_k)\right) +  \boldsymbol{\eta}_k$
\EndFor
\State Compute posterior mean of unconstrained parameters, $\boldsymbol{\Omega}^*$
\State Compute correction, $\hat{\boldsymbol{\Psi}}(\boldsymbol{\Omega}^*)$ \Comment{Computation for $\hat{\boldsymbol{\Psi}}(\boldsymbol{\Omega})$ is given in Lemma \ref{lemma:grad}}
\State Compute injected noise, $\boldsymbol{\Gamma} = \boldsymbol{\Gamma}(\boldsymbol{\Omega}^*) = \epsilon n^2\hat {\boldsymbol{\Psi}}(\boldsymbol{\Omega}^*) / (2S) + \mathbf{I}_d$
\State Compute inflated posterior variance of $\boldsymbol{\Omega}$, $\boldsymbol{\Sigma}$
\State Solve Lyapunov equation to find corrected posterior variance, $\mathbf{A}^{-1}$
\State Obtain corrected samples $\boldsymbol{\Theta}_k$ \Comment{Definition of $\boldsymbol{\Theta}_k$ is given in \eqref{eq:correction}}
\end{algorithmic}
\end{algorithm}

\subsubsection*{Biased Stochastic Gradients}

Having provided an implementable corrected sampling algorithm, we now turn to the issue of practical feasibility in constructing the stochastic gradient \eqref{e:hatgi}.
Recall that we assumed the feasibility of \emph{iid} Monte Carlo sampling from the conditional distribution $p(\boldsymbol{\gamma}_i|\mathbf{Y}_i,\boldsymbol{\Omega})$. 
The key here is that doing so results in both an unbiased estimator of the population gradient $h_{[n]}(\boldsymbol{\Omega})$ and a known covariance formula, as shown in Lemma \ref{lemma:grad}.
Situations in which this is possible are limited; as mentioned in Section \ref{sec:glmm_background}, the specific choie of link function can make a difference.
There are two known situations in which we can produce an unbiased stochastic gradient estimator \eqref{e:hatgi} using \emph{iid} Monte Carlo samples:
linear mixed models (LMMs) and probit binary regression model ($M=1$ in Table \ref{tab1:exp}).
The methods developed in this work will be mostly extraneous for the former case, for which an analytic integration in \eqref{eq:ml} is feasible.
In the latter case, results of Durante \cite{durante2019conjugate} can be used to show that the density of $p(\boldsymbol{\gamma}_i|\mathbf{Y}_i,\boldsymbol{\Omega})$ belongs to a family of unified skew-normal distributions \cite{arellano2006unified}.
A rejection sampler for a high-dimensional truncated normal distributions \cite{botev2017normal} can thus be used to draw \emph{iid} samples from the conditional distribution $p(\boldsymbol{\gamma}_i|\mathbf{Y}_i,\OO)$, for each $i\in[n]$.
    Botev \cite{botev2017normal} suggested that the numerical stability of the sampler begins to deteriorate around a dimension of 100.
    In our context, the worst-case dimension for using such a rejection sampler within our algorithm is $\max_{i=1,2,\ldots,n} n_i$, the largest number of within-subject measurements.
    Therefore, producing \emph{iid} random effects samples in this latter case is a genuine possibility only if there are very many groups in the dataset, but few groups have very many measurements.

For all other situations, including the logistic binary regression, there are currently no general recipes to draw  \emph{iid} samples from the conditional distribution $p(\boldsymbol{\gamma}_i|\mathbf{Y}_i,\boldsymbol{\Omega})$. 
We thus consider modifications to \eqref{e:hatgi} that are no longer unbiased Monte Carlo estimators. 
A straightforward extension is to use a biased stochastic gradient estimator that replaces an \emph{iid} sample average with an ergodic average with respect to $S$ Markov chains, each of whose stationary distribution is $p(\boldsymbol{\gamma}_i|\mathbf{Y}_i,\OO)$.
Write by $\mathcal{P}_i$ its transition kernel, and suppose the initial distribution is $\rho_{0i}$.
Then, our new stochastic gradient can be written as
\begin{equation}
    \label{e:mcmcgi}
    \hat g_i^{\rm MCMC}(\boldsymbol{\Omega}) = -\frac{1}{R}\sum_{r = 1}^R \nabla \log p(\mathbf{Y}_i, \boldsymbol{\gamma}_{ir} | \boldsymbol{\Omega}), \quad \boldsymbol{\gamma}_{ir}~{\sim}~
    \mathcal{P}^r_i\rho_{0i}(\boldsymbol{\gamma}_i| \mathbf{Y}_i, \boldsymbol{\Omega})~\forall r\in[R].
\end{equation}
The resulting stochastic gradient estimator can be straightforwardly used in Algorithm \ref{alg} instead of $\hat{g}_i(\OO)$.
Since this new algorithm relies on an additional MCMC iteration within each iteration, we call this a ``double loop'' algorithm.
The double loop version $\hat{g}_i(\OO)$ is no longer unbiased, so the covariance formula derived in Lemma \ref{lemma:grad} is technically incorrect.
The theoretical considerations in the following section are meatn to address possible accuracy concerns with this MCMC-based implementation.
While, intuitively, the use of MCMC samples inflates the variance of the stochastic gradient, we will study in Section \ref{sims} to what extent such an inflation factor is essentially negligible, and whether the correction in Algorithm \ref{alg} still works well.

\section{Theoretical Considerations}
\label{sec:theory}

In this section, we discuss questions about whether theory can support the empirically promising results of the proposed method to be provided in Sections \ref{sims}--\ref{sec:da}.
While the most comprehensive theoretical analysis for the algorithm is out of reach, we present here a somewhat heuristic justification of the validity of the correction method, and derive some practical recommendations for constructing an MCMC-based stochastic gradient estimator in the double loop algorithm.
Throughout this section, we maintain the following assumptions.

\begin{assumption}\label{a:smooth}
    Each $f_i,\;i\in\{0\}\cup[n]$ is twice continuously differentiable.
    There is a positive constant $L = L_{n,d}$ such that $\|\nabla f_i(\OO) - \nabla f_i(\OO')\|\leq L\|\OO - \OO'\|$ for every $i\in\{0\}\cup[n]$ and every $\OO,\OO'\in\mathbb{R}^d$.
\end{assumption}

\begin{assumption}\label{a:convex}
    Each $f_i,\;i\in\{0\}\cup[n]$ is convex, and there is a positive sequence $m = m_{n,d}$ such that $\langle\OO-\OO',\; \nabla f(\OO) - \nabla f(\OO')\rangle
    \geq m\|\OO - \OO'\|^2$ for every $n$ and $\OO,\OO'\in\mathbb{R}^d$.
\end{assumption}

\begin{remark}

The smoothness and strong convexity assumptions above are routinely used in the theoretical analyses of Langevin dynamics-based samplers. 
In our case, each $f_i$ corresponds to the marginal likelihood \eqref{eq:ml}.
When each complete likelihood $p(\mathbf{Y}_i,\boldsymbol{\gamma}_i|\OO)$ is log-concave in $\OO$, \eqref{eq:ml} is convex by the Pr{\'e}kopa-Leindler inequality \cite{prekopa1971logarithmic,leindler1972certain}. 
This implies that assumption \ref{a:convex} is met as long as a strongly convex prior is used.
On the other hand, assumption \ref{a:smooth} is easily checked to be met for logistic GLMMs, but not Poisson GLMMs with the usual exponential link function.
In Section \ref{sims:unknown_count}, we empirically investigate how this violation can adversely impact the resulting sample quality.
\end{remark}

Under the above assumptions, we may first consider whether the use of a biased stochastic gradient is a cause of concern for the ergodicity of the sampler.
If the mean squared error of the stochastic gradient is uniformly bounded in $\OO$, we can use \cite[Theorem~4]{dalalyan2019user} to obtain an error bound with respect to the desired posterior $\pi$.
For completeness, we re-state their result in our notation below.
We only state the case for the MCMC-based stochastic gradient which is derivable for general GLMMs and supposedly more difficult to analyze.
\begin{theorem} \label{t:main1}
    Write by $\rho_k$ the distribution of $\OO_k$ at the $k$-th iteration of SGLD update \eqref{e:sgld} that uses the stochastic gradient \eqref{e:mcmcgi}.
    Under assumptions \ref{a:smooth} and \ref{a:convex}, suppose there also exists constants $c=c_{R,d},\;B=B_{R,d}$ such that for every $i\in[n]$ and $k=1,2,\ldots$ we have:
    \begin{align} \label{eq:condition_n}
        \mathbb{E}[\|\mathbb{E}[\hat{g}_i^{\rm MCMC}(\OO_k)|\OO_k] - \nabla f_i(\OO_k)\|^2] \leq c^2d,\;
        \mathbb{E}[\mathbb{V}[\hat{g}_i^{\rm MCMC}(\OO_k)|\OO_k]]\leq Bd.
    \end{align}
    Then, for $\epsilon\leq\frac{2}{L+m}$, we have
    \begin{equation*}
        W_2(\rho_k,\pi)\leq (1-m\epsilon)^kW_2(\rho_0,\pi) + 1.65\frac{L}{m}\sqrt{\epsilon d} + \frac{nc\sqrt{d}}{
        Sm} + \frac{n^2B\sqrt{\epsilon d}}{(1.65LS + n\sqrt{SBm})}.
    \end{equation*}
\end{theorem}

\begin{proof}
    See the proof of \cite[Theorem~4]{dalalyan2019user}.
    The additional assumption \eqref{eq:condition_n} is referred to as ``Condition N'' in these authors' work.
\end{proof}

The above result is useful to ensure a stable behavior of the SGLD when combined with a biased stochastic gradient, due to the use of MCMC samples per \eqref{e:mcmcgi}. 
When $m=O_n(n)$, the last term is $O_n(\sqrt{n\epsilon})$ and order-wise dominates the other terms, reinforcing our message that an uncorrected SGLD will tend to have very large errors in approximating the targeted posterior (see also the discussion of the cited authors \cite{dalalyan2019user}).
Assumption \eqref{eq:condition_n} implies that there is a uniform bound on the bias of $\hat{g}_i^{\rm MCMC}$ approximating $\nabla f_i$, based on the ``inner loop'' Markov chain targeting $p(\boldsymbol{\gamma}_i|\mathbf{Y}_i,\OO_k)$ for every group $i$ and every possible value of $\OO_k$ from the ``outer loop.''
While it is unrealistic to expect that the number of iterations $R$ or most other tuning parameters of the inner loop MCMC algorithm can be adapted to different group indices $i$ or conditioned values of $\OO_k$s,
the user can still automate an adaptive initialization strategy to ensure uniform accuracy.
We therefore recommend to ``warm-start'' the Markov chain targeting the conditional posterior of $\boldsymbol{\gamma}_i$ at the mode of $\boldsymbol{\gamma}_i\mapsto\log p(\mathbf{Y}_i,\boldsymbol{\gamma}_i|\OO)$, for each $(i,\OO)$.
Finding the modal value of $\boldsymbol{\gamma}_i$ can be automated using off-the-shelf algorithms, such as a simple Fisher scoring method with adaptively decreasing step size to prevent divergence \cite{glm2}.

Next, we turn to some asymptotic analyses that support the heuristic insight behind the Lypaunov equation \eqref{e:2}, as explained in Section \ref{sec:correction}, which is the basis of our correction method in Algorithm \ref{alg}.
Assumptions \ref{a:smooth} and \ref{a:convex} imply the existence of a unique minimizer $\OO^*$ of the negative log-posterior density $f$ for each $n$.
Intuitively, by a Laplace-type argument, it must be the case that the posterior covariance $\int_{\mathbb{R}^d} (\OO - \OO^*)(\OO - \OO^*)^\top \pi(d\OO)\,$ should be similar to 
the inverse of $\mathbf{A} = \nabla^2 f(\OO^*)$ for large $n$. 
We state the following theorem under higher-order smoothness conditions on $\nabla f_i$s, which relates the matrix $\mathbf A$ to the solution of the Lyapunov equation given in \eqref{e:2}, up to a term negligible relative to the variance of the stochastic gradient estimator. Thus, the correction method outlined in Section \ref{sec:practical} and Algorithm \ref{alg} can be justified in the large-$n$ regime, as long as the Laplace approximation is accurate for the posterior.
Below, we prove a result supporting this argument in the idealized situation, where the MCMC chain is assumed to have mixed sufficiently well, and when an unbiased Monte Carlo stochastic gradient estimator \eqref{e:hatgi}.

\begin{theorem}\label{t:main2} Let $\mathbf A  := \nabla^2 f(\OO^*)$. Set $\epsilon = S/n^{1+\delta}$ for $\delta \in (0,1]$ such that $\epsilon\ll n^{-1}$. 
Assume that each $f_i,\; i\in\{0\}\cup[n]$ is four times continuously differentiable with derivatives bounded by $L$ (from assumption \ref{a:smooth}).
Also, assume that $\liminf_{n\to\infty} m/n>0$ (from assumption \ref{a:convex}).
Consider the SGLD algorithm \eqref{e:oursgld} with an unbiased stochastic gradient estimator. 
Then,
\begin{equation}
{\boldsymbol \Sigma} \mathbf{A} + \mathbf{A} {\boldsymbol \Sigma}  = 2 \boldsymbol \Gamma \left(1 + O_{n}(n^{-\delta/2})\right),
\end{equation}
where $\boldsymbol{\Sigma}$ is the covariance of the SGLD stationary distribution.
\end{theorem}

\begin{proof}
    See Section 2 of the online supplementary materials \cite{oursupplement}.
\end{proof}

\begin{remark}
We have not addressed whether the MLE-based approximation $\mathbf{A}^{-1}$ of the posterior variance is accurate.
Under assumptions \ref{a:smooth} and \ref{a:convex}, \cite[Proposition~5]{brosse2018promises} is applicable and yields a standard $O_n(1/n)$ error rate.
For \emph{i.i.d.} models, very comprehensive results exist on asymptotic distributional properties of Bayes posteriors (see, e.g., \cite{kasprzak2025good}); in contrast, GLMMs have not received nearly as much attention.
A talk by Johnstone \cite{Johnstone24vid} offers a complex analytic proof of asymptotic normality for MLEs, which result one may use to prove a Bernstein von-Mises-like result.
This goes beyond the scope of our current work.
\end{remark}

Theorem \ref{t:main2} assumes user access to the exact covariance $\boldsymbol{\Sigma}$ of the algorithm's stationary distribution, the MLE $\OO^*$, and the stochastic gradient covariance matrix $\boldsymbol{\Gamma}$ at $\OO^*$.
These are all idealizations, so we have proposed approximating them with SGLD sample-derived quantities in Section \ref{sec:practical}.
The replacement of $\boldsymbol{\Sigma}$ with an SGLD sample covariance can be justified for a long enough Markov chain since, in the proof of Theorem \ref{t:main1}, we show the geometric ergodicity of the uncorrected SGLD.
Provided the chain has mixed well, replacement of the exact $\OO^*$ with an SGLD sample mean can be also justified

\section{Simulations}
\label{sims}

In this section, we demonstrate the performance of Algorithm \ref{alg} in various settings. We show that our algorithm properly estimates the posterior mean and variance and is a viable technique when the sample size becomes large, in contrast to existing methods, including Gibbs sampling and a control-variate SGLD.

\subsection{Illustration: LMM with Known Variance Components}
\label{sims:fixed}

We first investigated the performance of Algorithm \ref{alg} in estimating various quantities of the posterior distribution. 
Data was generated from a linear mixed model, ${Y}_{it} \stackrel{{ind}}{\sim} \mathcal N(\mathbf{x}_{it} \boldsymbol{\beta} + \mathbf{x}_{it} \boldsymbol{\gamma}_i, \sigma^2)$ and $\boldsymbol{\gamma}_i | \boldsymbol{\Sigma} \stackrel{{iid}}{\sim} \mathcal N(\mathbf{0},\boldsymbol{\Sigma})$.
Since the likelihood is Gaussian, the following quantities can be found in closed form: posterior for $\boldsymbol{\beta}$, gradient of the marginal log-likelihood, and posterior predictive distribution (PPD). 
We assumed $\mathbf{x}_{it} = \left(1,x_{it}\right)^\top$ and $x_{it} \stackrel{{iid}}{\sim} \mathcal N(0,1)$ and true parameters, $\boldsymbol{\beta} = (1.5, -0.5)^\top$, $\sigma^2 = 2$ and $\boldsymbol{\Sigma} = \begin{pmatrix} 1.5 & -0.25\\ -0.25 & 1.5 \end{pmatrix}$. We analyzed sample sizes of $n \in \{10^2, 10^3\}$, and set $n_i = 10$. For each value of $n$, we simulated 100 data sets. 

We used Algorithm \ref{alg} to sample from the posterior of $\boldsymbol{\beta}$ and present results for both the corrected ($\boldsymbol{\Theta}$) and uncorrected ($\boldsymbol{\Omega}$) posterior samples.
Importantly, in this experiment, we chose to fix $\boldsymbol{\Sigma}$ and $\sigma^2$ as \emph{known, oracle} values, and focus on the algorithm's ability to infer correct posterior mean and variance for the coefficient parameters.
This implies that the experiment mainly serves a pedagogic purpose; it avails us to directly evaluate the algorithm's ability to recover the true posterior distribution, the gradient of the marginal log-likelihood, and the PPD.
Weakly informative, independent $\mathcal N(0, 10^2)$ priors were used for $\beta_0,\beta_1$. We present results for $S \in \{1, 5, 10\}$ and $\delta = \{\delta: \delta \in ([10]/10) \cap \epsilon = S/n^{1+\delta} < n^{-1}\}$. $R$ was set to 100 throughout all simulations. The algorithm was run for 100 continuous time steps (i.e., $100 / \epsilon$) and was thinned to 5,000 samples.

\begin{figure}[t]
\begin{center}
\includegraphics[width=0.9\textwidth]{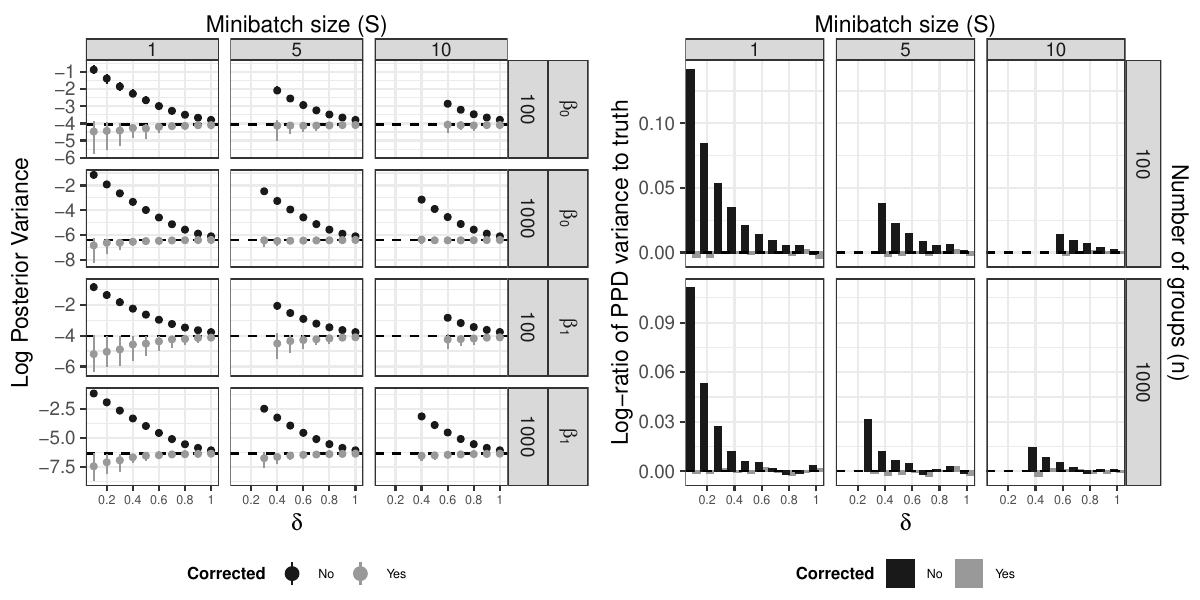}
\caption{(Left) Comparison of SGLD posterior variance estimates from the experiment in Section \ref{sims:fixed} against the truth (dashed black line), before and after correction. 
Points and intervals correspond to mean and 95\% quantile-based intervals across replicate datasets.
(Right) Log ratio of SGLD-estimated PPD variance to the true PPD variance, before and after correction.}
\label{fig:sim1_variance}
\end{center}
\end{figure}

Results from this simulation are presented in Figure \ref{fig:sim1_variance}, with more details provided in Figures 1--3 of the online supplementary materials \citep{oursupplement}. 
The left side of Figure \ref{fig:sim1_variance} presents the log-posterior variance estiamtes, averaged across 100 simulated data sets with 95\% quantile-based intervals. The results indicate that the uncorrected SGLD algorithm consistently overestimated the posterior variance, though the upward bias reduces as $\delta \rightarrow 1$. Higher $\delta$ yields better accuracy, but it implies smaller step size and a computational overload. The corrected SGLD improved the accuracy: quantile intervals always included the true log-variance, and the estimation quality improved with larger $n$, $S$, and $\delta$. These results emphasize the importance of choosing $\delta$ that balances the desired accuracy with computational budget. 
The right side of Figure \ref{fig:sim1_variance} assesses the predictive accuracy using the algorithmic estimates, through the log ratio of estimated-to-true PPD variances. Log ratio values near zero indicate good performance.  
The corrected SGLD maintains proper uncertainty quantification in the PPD in every simulation setting. The uncorrected SGLD algorithm generally performs poorly, with acceptable performance only when $\delta \rightarrow 1$. As a reminder, SGLD requires longer runtime with larger $\delta$ and when the sample size grows values of $\delta$ close to 1 are not computationally feasible. 
Thus, the corrected SGLD is a viable alternative to uncorrected SGLD when the sample size becomes large, because it can yield proper PPD estimation regardless of $\delta$, and therefore is computationally efficient.

While this experiment described fitting LMMs with known variance components, in real world settings, both the dispersion parameter of the Gaussian likelihood and the covariance parameter of subject-speicif ccoefficient will also need to be estimated.
The performance of our method in a more realistic LMM is given in Section 4.2 of the online supplementary materials \cite{oursupplement}.
Since the main aim of our method is to address inference in large data non-Gaussian GLMMs, we move on to describing simulated studies for binary and count data outcomes.

\subsection{Logistic GLMM for Binary Outcome Data}
\label{sims:unknown_bin}

In this Section, we assess the accuracy of the proposed method in fitting Bayesian logistic GLMM to binary outcome data.
Recall that this model is obtained from the general framework \eqref{e:exp_family} by setting $a(\phi) = 1$, $b(\theta_{it}) = \log(1 + \exp\{\theta_{it}\})$, $c(Y_{it}, \phi) = 0$, $\phi = 1$, and $\theta_{it} = \log(\pi_{it} / (1 - \pi_{it}))$. The mean parameter is given by $\mu_{it} = \pi_{it} = \exp\{\theta_{it}\} / (1 + \exp\{\theta_{it}\})$. This yields the model, $Y_{it}|\boldsymbol{\gamma}_i,\boldsymbol{\Omega} \stackrel{{ind}}{\sim} \text{Bernoulli}(\pi_{it})$, $\text{logit}(\pi_{it}) = \mathbf{x}_{it}^\top\boldsymbol{\beta} + \mathbf{z}_{it}^\top\boldsymbol{\gamma}_i$, where $\text{logit}(x) = \log(x/(1-x))$, and the population parameters are given by, $\boldsymbol{\Omega} = \{\boldsymbol{\beta},\boldsymbol{\Sigma}\}$. We assumed that $\mathbf{x}_{it} = \left(1,x_{it}\right)^\top$ and $x_{it} \stackrel{{iid}}{\sim} \mathcal N(0,1)$. True parameters values are the same as in Section \ref{sims:fixed}. 
We generated 100 datasets using these settings, for two different group sizes  $n \in \{10^4,10^5\}$ with $n_i = 10$.

For each sampling algorithm, we used re-parameterized between-group covariance parameters as follows:
$\delta_{1} = \log (\sigma_{1})$, $\delta_{2} = \log (\sigma_{2})$, and $\delta_{\rho} = 2{\rm artanh}(\rho)$. 
Independent half-$t$ priors with 3 degrees of freedom were placed on each standard deviation. A uniform prior was placed on $\rho$, a choice which can be generalized to handle multivariate correlation matrices \cite{lewandowski2009generating}.
An independent $N(0, 10^2)$ prior was used for the intercept and slope. 
Gradients for the conditional log-likelihood and prior are reported in Section 5.1 of the online supplementary \citep{oursupplement}. Within the algorithm, we used $R=1,000$ P\'{o}lya-Gamma data augmentation Gibbs samples targeting $p(\boldsymbol{\gamma}_i | \mathbf{Y}_i, \boldsymbol{\Omega})$ for each $i$ at each iteration; this procedure, along with more details about the experiment, is detailed in the online supplementary materials \citep{oursupplement}.

The two main tuning parameters of SGLD algorithm are minibatch size $S$ and hyperparameter $\delta>0$ which determines the step size as $\epsilon = S/n^{1+\delta}$. 
We considered three choices of $S \in \{1,5,10\}$, based on practical needs (i.e., $S$ cannot be too big for computational gains).
For each $S$, we chose a $\delta$ using the heuristic rule: $\delta = (\delta_{\text{min}} + \delta_{\text{max}}) / 2$, where $\delta_{\text{min}} = \min_{\delta} \{\epsilon < n^{-1}\}$ and $\delta_{\text{max}}$ = 1 (recall that $\epsilon = n^{-(1 + \delta)}S$). The intuition behind this definition is that an ideal value of $\delta$ will be in a range with proper posterior variance estimation while maximizing computational efficiency.

\subsubsection*{Performance Evaluation and Comparison}

Performance of SGLD was assessed by pre-specifying a runtime window of 6 hours and evaluating online posterior mean/variance estimates as functions of runtime, each run parallel across 100 generated datasets.
Posterior mean and variance estimates were aggregated to produce Figure.
The accuracy of posterior mean estimate was assessed against the true parameters used for simulating the data.
The accuracy of posterior variance estimate cannot be so easily assessed, however, posing a benchmark problem.
For both $n=10^4,10^5$, we considered comparison against a Poly{\'a}-Gamma data augmentation-based Gibbs sampler with no data subsampling.
The comparison served a dual purpose, one of providing a benchmark posterior variance estimate (provided it converges in a reasonable time) and demonstrating possible computational gains of SGLD (by faster convergence in real runtime). 
Furthermore, for $n=10^4$ only, we also considered sampling from a No U-Turn Sampler (NUTS) implemented in STAN software \cite{stan}, as a further benchmark. 
The default warm-up and initialization settings were used, producing 1,000 samples in 4 parallel chains, for each of the generated datasets. 
NUTS was not considered for $n=10^5$ setting, because of a longer required runtime and severe memory constraints.

\begin{figure}[t]
\begin{center}
\includegraphics[width=\textwidth]{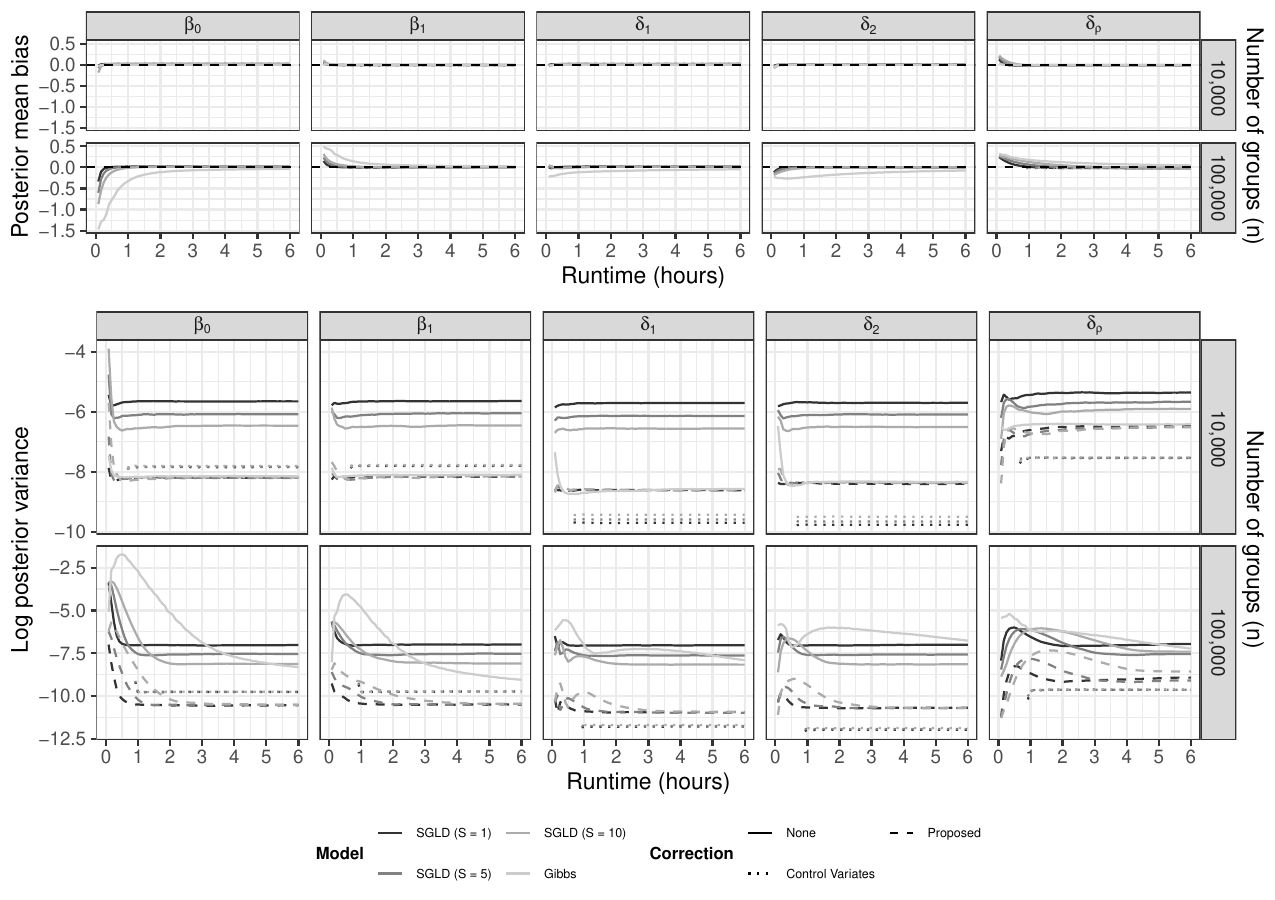}
\caption{Bias of posterior mean estimate and (log) posterior variance estimate for each algorithm as a function of runtime for the experiment in Section \ref{sims:unknown_bin}.
Each summary statistic is averaged over 100 replicate datasets.} 
\label{fig:sim3_variance}
\end{center}
\end{figure}

In addition to the already described Gibbs and NUTS benchmarks, we considered the SGLD-CV algorithm proposed by \cite{baker2019control}, a variant of SGLD algorithm using control variates, as a competitor method.
The SGLD-CV method is a direct competitor to our method, as it aims at the same computational gains in sampling-based Bayesian inference large data settings.
Making direct comparisons to the previous authors' work in the GLMM setting was not straightforward, as their work did not consider the GLMM setting. We thus implemented a custom control variates method based on their work. First, in the initially required SGD phase, we used $R=5,000$ samples so that this competitor algorithm has a better chance of accurately locating the posterior mode. Second, we used two independent sets of MCMC samples $\boldsymbol{\gamma}_i$, targeting $p(\boldsymbol{\gamma}_i|\mathbf{Y}_i,\OO_k)$ at each iteration $k$ and $p(\boldsymbol{\gamma}_i|\mathbf{Y}_i,\hat{\OO}^*)$ for the SGD esitmate of the MLE $\hat{\OO}^*$. Finally, we fixed the step size in the sampling phase to be $O(1/n^{1+\delta})$ with $\delta = 0.4$, as larger step sizes tended to lead to divergence. 

\subsubsection*{Posterior Variance Estimation Results}

Figure \ref{fig:sim3_variance} summarizes the posterior summary estimates obtained from various algorithms as a function of runtime.
For a fair comparison of the accuracy obtained by our proposal with other methods, the proposed correction was periodically computed in an online manner, using samples obtained prior to each 5-minute mark.
The key take-aways are:
\begin{enumerate}
    \item for $n=10^4$, the corrected SGLD algorithm yielded posterior variance estimates indistinguishable from those of the Gibbs sampler, while the uncorrected SGLD algorithm overestimated the variance;
    \item the performance of the corrected SGLD algorithm was not impacted by the increase in sample size, as opposed to the Gibbs sampler or NUTS, which severly suffered from a computational overload;
    \item SGLD-CV overestimated the posterior variance for the regression coefficients ($\beta_0,\beta_1$) and underestimated the random effect variance parameters ($\delta_1,\delta_2,\delta_\rho$);
    \item the above findings were robust to the choice of $S$ or $n$.
\end{enumerate}
For a further confirmation of the accuracy of posterior variances in logistic GLMMs, we compared the results against NUTS run for 6 hours. The comparison is shown in Figure \ref{fig:stan_variance} and demonstrate again the accuracy of our corrected posterior variance estimates.
The competitor method, SGLD-CV, incorrectly estimated the posterior variances across all parameters. 
An inflated posterior variance estimate for $\beta_0,\beta_1$ may be explained by the fact that a naive adaptation of SGLD-CV to use the GLMM likelihood, unlike our proposed correction, does not correct for the additional variability of the stochastic gradient induced by using MCMC samples to approximate the gradient of the marginal log-likelihood. The more puzzling downward bias of SGLD-CV for the posterior variance of entries of $\boldsymbol{\Sigma}$ may be explainable by the fact that our parameterization violates most of the regularity conditions in both Section \ref{sec:theory} and those of previous authors \cite{baker2019control}, including the Lipschitz continuity of the log-likelihood gradient.
That our method, unlike SGLD-CV, remained robust to the possible violation of regularity conditions is promising, and suggest that GLMMs call for tailored methods.

\subsection{Count Data Modeling with Poisson Likelihood}
\label{sims:unknown_count}

In this Section, we assess the accuracy of the proposed method in fitting Bayesian Poisson GLMM to count outcome data.
Recall that this model is obtained from the general framework \eqref{e:exp_family} by setting
$a(\phi) = 1$, $b(\theta_{it}) = \exp\{\theta_{it}\}$, $c(Y_{it}, \phi) = -\log(Y_{it}!)$, $\phi = 1$, and $\theta_{it} = \log(\lambda_{it})$. The mean parameter is given by $\mu_{it} = \lambda_{it} = \exp\{\theta_{it}\}$. This yields the model, $Y_{it}|\boldsymbol{\gamma}_i,\boldsymbol{\Omega} \stackrel{\text{ind}}{\sim} \text{Poisson}(\lambda_{it})$, where $\log(\lambda_{it}) = \mathbf{x}_{it}^\top\boldsymbol{\beta} + \mathbf{z}_{it}^\top\boldsymbol{\gamma}_i$, and $\boldsymbol{\Omega} = \{\boldsymbol{\beta},\boldsymbol{\Sigma}\}$. In this simulation, the diagonal values of $\boldsymbol{\Sigma}$ were changed to 0.4 to generate more realistic Poisson random variables; otherwise, the details of each simulated dataset, prior specifications, and unconstrained re-parameterizations of $\boldsymbol{\Sigma}$, were all set to be the same as the logistic GLMM experiment described in Section \ref{sims:unknown_bin}.
Unlike in the case of logistic GLMM, there is no easy data augmentation procedure that can be used to produce samples from the conditional distribution of subject-specific parameters. Therefore, we instead used a Metropolis-Hastings sampler for obtaining the stochastic gradient estimator. 
Proposal values were sampled from $\boldsymbol{\gamma}_{ir}^* \sim \mathcal{N}\left(\boldsymbol{\gamma}_{ir},\boldsymbol{\Delta}\right)$, where $\boldsymbol{\Delta}$ is a tuning parameter adaptively chosen \cite{Roberts01012009}. 
The algorithm was implemented with $R=100$ after a burn-in period and the acceptance rate was tuned to be approximately 40\%. 

\begin{figure}[t]
\begin{center}
\includegraphics[width=\textwidth]{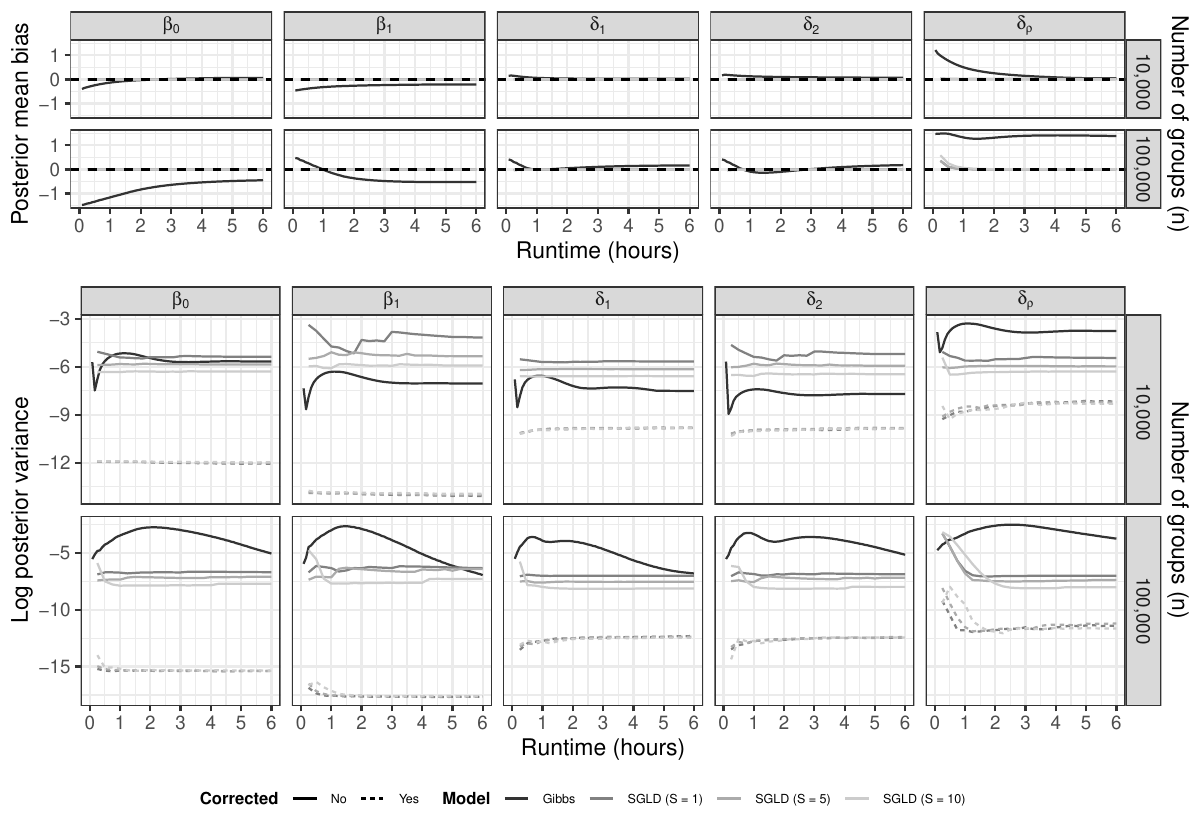}
\caption{Bias of posterior mean estimate and (log) posterior variance estimate for each algorithm as a function of runtime for the experiment in Section \ref{sims:unknown_count}.
Each summary statistic is averaged over 100 replicate datasets.}
\label{fig:sim4_variance}
\end{center}
\end{figure}

\begin{figure}[t]
    \centering
    \includegraphics[width=0.75\textwidth]{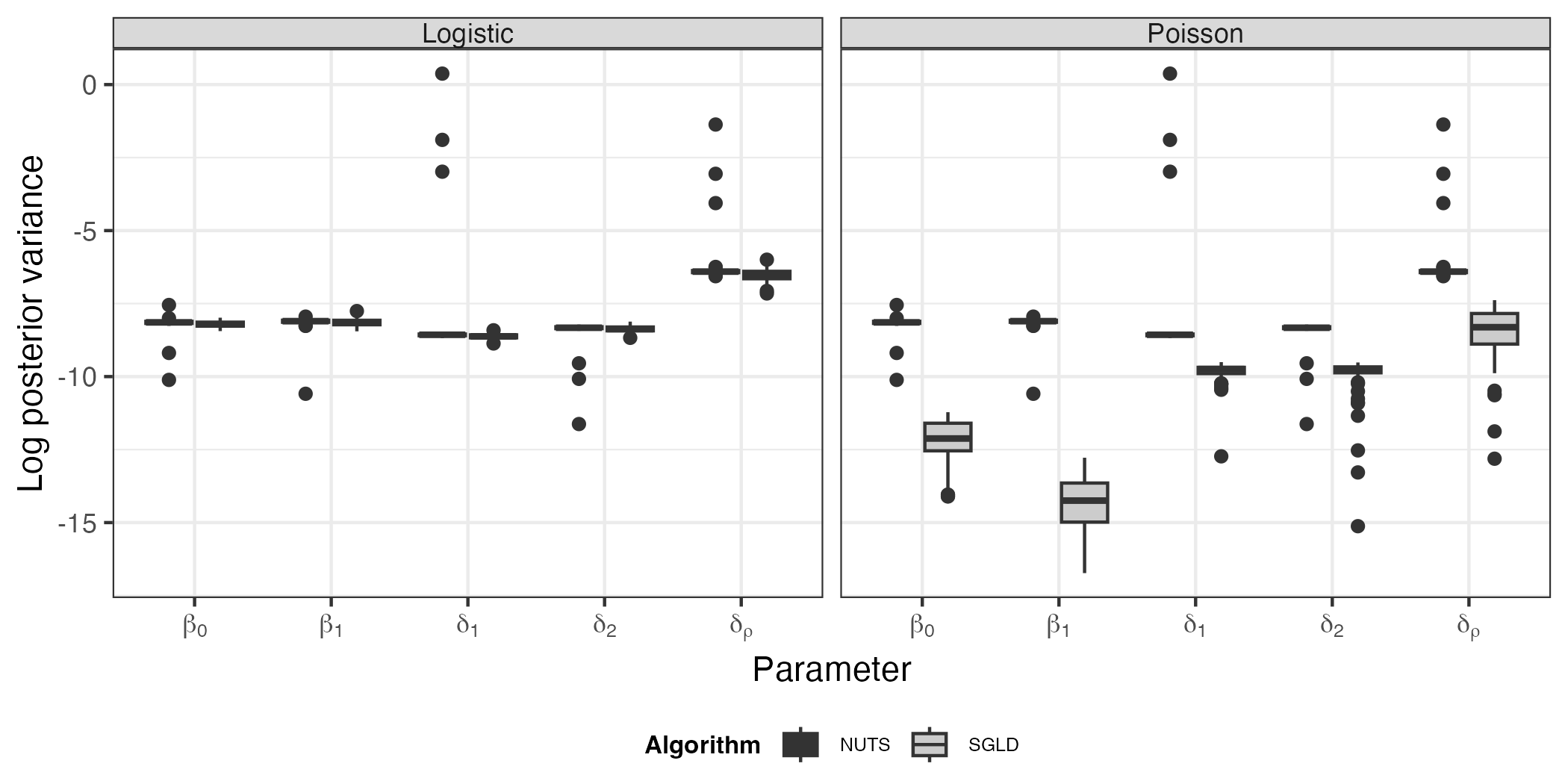}
    \caption{Comparison of corrected SGLD posterior variance estimates against NUTS for logistic and Poisson GLMMs across 100 simulated datasets.}
    \label{fig:stan_variance}
\end{figure}

This experiment posed many computational problems uniformly across all methods; the use of Metropolis-Hastings sampler within the Gibbs iteration visibly slowed down its convergence, and NUTS sampler for $n=10^4$ took significantly more real runtime than the allotted 6 hours to produce the first 1,000 samples on average.
For the latter sampler, we stopped sampling at the 1,000 sample mark if time exceeded 6 hours.
Details for the conditional Gibbs sampler for Poisson regression are given in Section 4.4 of the online supplementary materials \cite{oursupplement}.

\subsubsection*{Posterior Variance Estimation Results}

Figure \ref{fig:sim4_variance} summarizes the posterior summary statistics obtained from various algorithms as a function of runtime.
Figure \ref{fig:stan_variance} compares the corrected SGLD variance estimates against that obtained from NUTS run for 6 hours.
When compared against results from the logistic GLMM, there was a noticeable deterioration in the accuracy of corrected SGLD-based posterior variance estimates.
In particular, the correction method consistently appears to \emph{over-correct}, by estimating a marginal posterior variance much lower than NUTS (if we take the latter as the benchmark).
This over-correction does not appear to be sufficiently explained by a possible under-estimation of the variance of the stochastic gradient (cf. the discussion at the end of Section \ref{sec:theory}), for this should also affect the logistic regression simulation, in which case our method showed excellent accuracy.
We believe that a core problem in the Poisson likelihood model is the super-exponentially decaying tail behavior of the posteriors for $\boldsymbol{\beta}$, due to the commonly used exponential link parameterization, which violates assumption \ref{a:smooth} in Section \ref{sec:theory}.
This issue has prompted other authors to either analyze a different link function \cite{prado2026metropolishastingsscalablesubsampling} or a theoretical analysis favoring Gibbs samplers that do not use gradient information \cite{chak2025complexity}. 
Although our correction does not perfectly estimate the posterior variance for $\boldsymbol{\beta}$ as was the case for the logistic GLMM, it substantially mitigates the extent to which SGLD overestimates the posterior variance in light of the STAN benchmark.
These results also suggest that the different posterior behavior suggested by Chak and Zanella \cite{chak2025complexity} is not a theoretical artifact, and that further studies may be needed in the future for reliable Bayesian inference with big count data.

\section{Real World Data Analysis}
\label{sec:da}

In patients with ophthalmic disorders, psychiatric risk factors play an important role in morbidity and mortality. Understanding how patient characteristics impact the probability of a patient having distress is a critical task that will facilitate proper and early psychiatric screening and result in prompt intervention to mitigate its impact. In this analysis of real world clinical data, we analyzed data from the Duke Ophthalmic Registry, a database that consists of adults at least 18 years of age who were evaluated at the Duke Eye Center or its satellite clinics from 2012 to 2021. The goal of the analysis was to identify patient characteristics associated with a diagnosis of psychiatric distress upon each clinic encounter. The Duke University Institutional Review Board approved this study with a waiver of informed consent due to the retrospective nature of this work. All methods adhered to the tenets of the Declaration of Helsinki for research involving human subjects and were conducted in accordance with regulations of the Health Insurance Portability and Accountability Act.

We defined ${Y}_{it} \in \{0,1\}$ as an indicator of psychiatric distress for patient $i$ ($i\in[n]$) and encounter $t$ ($t \in [n_i]$). Distress was defined using an electronic health records phenotype that has been detailed previously \citep{berchuck2022framework}. Each encounter was observed at follow-up time $\tau_{it}$, where $\tau_{i1}=0$, such that $\tau_{it}$ for $t > 1$ indicates the number of years from baseline encounter. The encounters for each patient were collected in the vector $\mathbf{Y}_i$. We also defined an indicator $w_i = 1\left(\sum_{t=1}^{n_i} Y_{it} = 0\right)$ to indicate whether all observations for a patient were zero. The observed data is given by $(\mathbf{Y}_i, w_i)$ with the joint distribution $p(\mathbf{Y}_i, w_i) = p(w_i)p(\mathbf{Y}_i | w_i)$. The outcome was modeled as $Y_{it} \stackrel{{ind}}{\sim} \text{Bernoulli}(\pi_{it})$, with $\text{logit}(\pi_{it}) = \mathbf{x}_{it}^\top \boldsymbol{\beta} + \mathbf{z}_{it}^\top \boldsymbol{\gamma}_i$ and the missing indicator was modeled as $w_{i} \stackrel{{ind}}{\sim} \text{Bernoulli}(p_{i})$, where $\text{logit}(p_{i}) = \alpha_0 + \mathbf{x}_{i}^\top \boldsymbol{\alpha}_{-0}$ and $\boldsymbol{\alpha} = (\alpha_0,\boldsymbol{\alpha}_{-0}^\top)^\top$. The covariates were defined as follows, $\mathbf{x}_{it} = (1,\tau_{it},\mathbf{x}_i^\top)$, where $\mathbf{x}_i$ contains patient-level covariates (e.g., baseline age). The vector $\mathbf{z}_{it} = (1,\tau_{it})^\top$ inducing a subject-specific intercept and slope for follow-up time, such that $q = 2$.

The set of population parameters is $\boldsymbol{\Omega} = (\boldsymbol{\alpha},\boldsymbol{\beta},\boldsymbol{\Sigma})$ and the posterior distribution is,
$p(\boldsymbol{\Omega} | \mathbf{Y}, \mathbf{w}) \propto p(\boldsymbol{\Omega})\prod_{i=1}^n p({w}_i | \boldsymbol{\alpha})\int p(\mathbf{Y}_i | \boldsymbol{\gamma}_i , {w}_i, \boldsymbol{\beta})p(\boldsymbol{\gamma}_i | \boldsymbol{\Sigma}) d\boldsymbol{\gamma}_i$,
with corresponding gradient,
$$\nabla \log p(\boldsymbol{\Omega} | \mathbf{Y},\mathbf{w}) = \nabla \log p(\boldsymbol{\Omega}) + \sum_{i=1}^n \nabla \log p(w_i|\boldsymbol{\alpha}) + \mathbb{E}_{\boldsymbol{\gamma}_i|\mathbf{Y}_i,w_i,\boldsymbol{\Omega}}[\nabla \log p(\mathbf{Y}_i,\boldsymbol{\gamma}_i | w_i,\boldsymbol{\Omega})],$$
where the expectation can be approximated using \eqref{e:hatgi}. The gradients inside of the expectation are $\nabla_{\boldsymbol{\beta}} \log p(\mathbf{Y}_i | \boldsymbol{\gamma}_{ir},\boldsymbol{\beta}, w_i) = 1(w_i = 0) \nabla_{\boldsymbol{\beta}} \log p(\mathbf{Y}_i | \boldsymbol{\gamma}_{ir},\boldsymbol{\beta})$, $\nabla_{\boldsymbol{\Sigma}} \log p(\boldsymbol{\gamma}_{ir} | \boldsymbol{\Sigma})$ is defined as before and $\nabla_{\boldsymbol{\alpha}} \log p(w_i | \boldsymbol{\alpha}) = (w_i - p_i)\mathbf{x}_i^\top$. Samples of $\boldsymbol{\gamma}_{ir}$ can be obtained from $p(\boldsymbol{\gamma}_i | \mathbf{Y}_i,w_i,\boldsymbol{\Omega}) = w_i p(\boldsymbol{\gamma}_i | \boldsymbol{\Sigma}) + (1 - w_i) p(\boldsymbol{\gamma}_i | \mathbf{Y}_i,\boldsymbol{\Omega})$. It is clear that when $w_i = 1$ the posterior is equal to the prior and when $w_i = 0$ the posterior is the same as the Bernoulli GLMM.

The sample included 40,326 patients ($n$), of which 15\% had at least one distress indicator. On average patients had 9 encounters, were 60 years of age, 41\% male, and 68\% white. Full demographic details are given in Table 1 of the online supplementary materials \citep{oursupplement}. In our model, the reference categories for categorical variables were white (Race), and commercial (Insurance). The continuous variables age, income, and education were standardized.

\begin{figure}[t!]
\begin{center}
\includegraphics[width=0.8\textwidth]{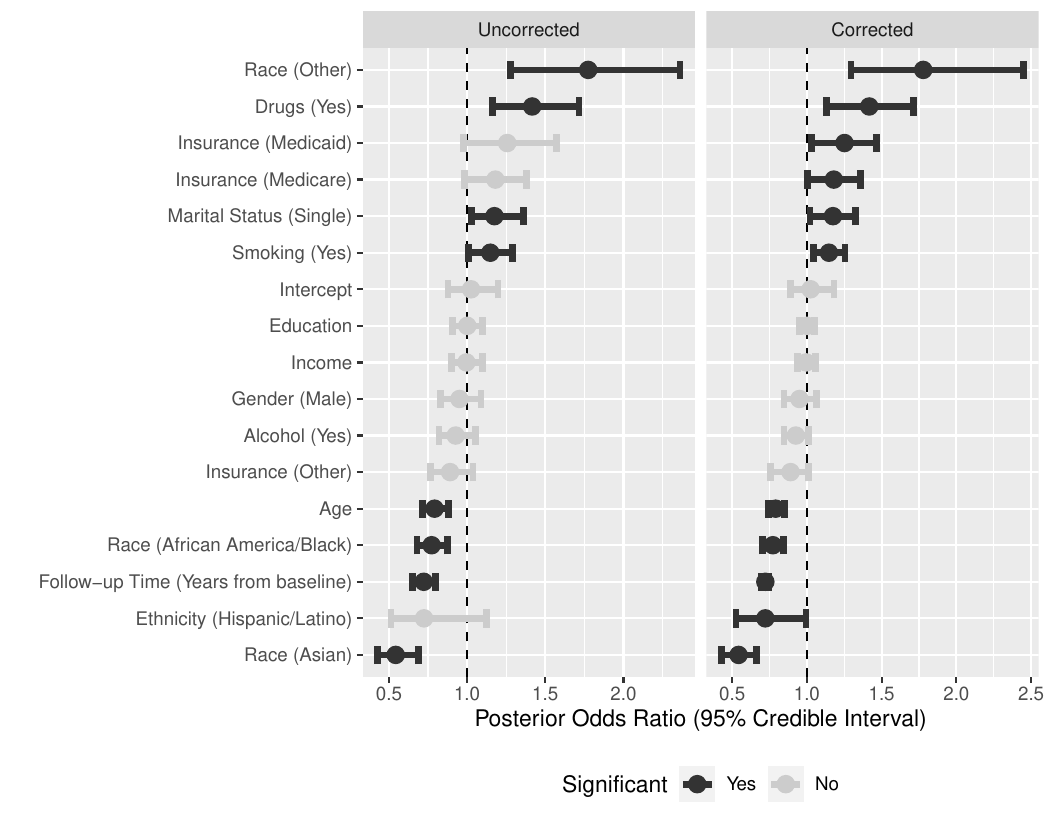}
\caption{Posterior odds ratios and 95\% credible intervals for $\boldsymbol{\beta}$. Summaries are presented for the uncorrected and corrected SGLD algorithm. The parameters are presented in decreasing order and color coded based on whether the credible interval contained zero. \label{fig:realdata_betas}}
\end{center}
\end{figure}

We performed inference using both the corrected and uncorrected SGLD algorithms using $S=1$. In Figure \ref{fig:realdata_betas}, we presented odds ratios and 95\% credible intervals for $\boldsymbol{\beta}$ for both SGLD algorithms. Estimates are color-coded to indicate whether the 95\% credible interval included one, which corresponds to a one-sided Bayesian p-value. Of the variables, other race, drugs use, Medicaid insurance, Medicare insurance, single marital status, and smoking were all significantly associated with an increased probability of being distressed, assuming the presence of any distress (i.e., $w_i = 1$). An increase in baseline age, African American/black race, increased follow-up time, Hispanic/Latino ethnicity, and Asian race were all associated with a decreased probability of being distressed. Examination of this figure illustrates the importance of the covariance correction for obtaining proper inference as the uncorrected SGLD algorithm yielded insignificant p-values for Medicaid insurance, Medicare insurance, and Hispanic/Latino ethnicity. The posterior estimates of $\boldsymbol{\alpha}$ and $\boldsymbol{\Sigma}$ are given in Figure 6 and Table 2 of the online supplementary materials \citep{oursupplement}.

\section{Discussion}

In this paper, we introduced an algorithm for performing accurate and scalable Bayesian inference for the GLMM in the large-sample regime. To the best of our knowledge, our approach is the first to adapt SGMCMC methods to this setting, overcoming limitations that arise from a naive application of such algorithms to dependent data. The main contributions of our work include: (i) a Monte Carlo approximation of the gradient of the marginal log-likelihood using Fisher’s identity, enabling SGMCMC updates in the presence of intractable integrals; and (ii) an asymptotic analysis of the variance structure of the resulting algorithm, allowing us to derive a post-hoc correction that restores the correct posterior covariance. 
We demonstrated the accuracy and scalability of our algorithm for Bayesian inference in GLMMs through a series of simulation studies in Section \ref{sims}, with models for Gaussian, logistic (binary outcome), and Poisson (count outcome) data. In particular, Sections~\ref{sims:unknown_bin}--\ref{sims:unknown_count} demonstrated how SGLD is a promising approach for GLMM inference that maintains computational efficiency as $n$ increases, even in regimes where MCMC becomes prohibitively expensive.

As discussed in the introduction, fixed step size SGLD leads to inflated posterior variance \citep{brosse2018promises}, and a variety of correction strategies have been developed to address this, including preconditioning \citep{stephan2017stochastic}, Metropolis-Hastings adjustments \citep{gelman1995bayesian}, and control variate methods \citep{baker2019control}. However, these approaches are typically designed for independent data settings and can be prohibitively expensive or difficult to extend to models with intractable marginal likelihoods, such as GLMMs. In our simulations, we compared our method directly to the SGLD-CV method, a scalable method with good theoretical properties. While SGLD-CV effectively reduces variance from minibatching, it fails to account for the additional variability introduced by approximating the marginal likelihood. 
This omission results in miscalibrated posterior uncertainty in GLMMs. In contrast, our correction accounts for both sources of variability and achieves accurate uncertainty quantification, as demonstrated in both posterior and predictive comparisons across a range of GLMM settings. 
These results highlight the importance of tailoring variance correction to the specific structure and inferential challenges of dependent-data models.
In Section~\ref{sec:da}, we applied our algorithm to a real-world electronic health records dataset to demonstrate its utility for statistical inference in complex, large-scale settings. As shown in Figure~\ref{fig:realdata_betas}, the covariance correction meaningfully impacted inference: 95\% credible intervals changed interpretation for key covariates such as insurance status and ethnicity, underscoring the importance of properly accounting for variance inflation. 

The term ``mixed models'' can refer to a broader suite of models than what have been described here. 
Dataset can have multiple levels of nesting, as when longitudinally correlated data are available for each patient hospitalized in different medical centers.
In principle, computational methods described in this paper admit an extension to multiple nesting settings, but some details may need to be worked out for the specific variance reduction, depending on the form of the model accommodating multilevel random effects.
A more involved class of models are the so-called ``crossed effects'' models, where the data structure does not admit a single global nesting variable.
These models are known to pose significantly more computational challenges than mixed models considered here, and have sparked a flurry of recent research in both likelihood-based \cite{ghosh2022backfitting,bellio2025consistent} and Bayesian inference \cite{papaspiliopoulos2020scalable,papaspiliopoulos2023scalable}.
Crossed-effects extension of the methods described in this work will require a substantial modification, due to the lack of a simple data subsampling scheme that yields an even theoretically unbiased estimator, and will be pursued elsewhere.

There are other possible improvements and limitations to be noted.
One of the pronounced limitations of this work was that the accuracy of the proposed method is much worse for exponential link Poisson GLMM than for logistic GLMM. 
A possible remedy is to change the link function, similarly as done by Prado et al. \cite{prado2026metropolishastingsscalablesubsampling} or investigated in more detail by Hamura \cite{hamura2025exact}.
This will unfortunately imply that using the traditionally favored exponential link is foreclosed, with less interpretable coefficients.
Future research may need to address adaptive step-size discretizations that can overcome the smoothness assumption \ref{a:smooth}.
The second immediate limitation is the need to choose $R$, the number of iterations, and possibly more tuning parameters, of an inner loop MCMC algorithm for computing the stochastic gradient estimator.
To eliminate tuning needs and possibly also gain algorithmic efficiency, the authors have begun an investigation into alternative means of stochastic gradient computation needed for GLMMs.
The use of expectation propagation in the context of GLMMS was first investigated by Hall et al. \cite{hall2020fast} for probit models and appears highly promising.
Other promising approaches include a combination of methods introduced in this work with quasi-Monte Carlo, rather than Monte Carlo, samples \cite{sobol1990quasi}.

At a theoretical level, future research directions include obtaining quantitative concentration bounds for the uncorrected posterior distribution, giving indications on how to choose the parameter $\delta$ depending on $n$. It will be also important to consider extensions of our algorithm to modern machine learning settings, to incorporate momentum in stochastic gradient updates, provide guarantees for high-dimensional models, and infer deep neural networks. 
At an application level, future research directions include generalizations to non-trivial distributional assumptions and covariance specifications for the subject-specific parameters, flexible prior specifications for the population regression parameter to allow for sparsity and regularization, and extensions to crossed effects, spatial data, and federated learning settings.


\backmatter

\bmhead{Supplementary Information}
This paper is accompanied with a supplementary text that includes omitted proofs and additional details about the reported simulated experiments. 

\bmhead{Acknowledgements}
We would like to acknowledge the late Sayan Mukherjee, whose thoughtfulness, generosity, and intellectual vision were instrumental in bringing together the authors of this work. Sayan had a remarkable ability to unite collaborators from very different backgrounds to tackle challenging and important problems, and this project is a direct reflection of that gift. His insight and encouragement shaped the direction of this work, and we are deeply grateful for his contributions and for the opportunity to have worked with him.

\bmhead{Author Support}
Research reported in this publication was supported by the National Eye Institute of the National Institutes of Health (Bethesda, Maryland) under Awards Number R00EY033027 (SIB). The sponsor or funding organization had no role in the design or conduct of this research. The content is solely the responsibility of the authors and does not necessarily represent the official views of the National Institutes of Health. AA is member of INdAM (GNAMPA group), and acknowledges partial support of Dipartimento di Eccellenza, UNIPI, the Future of Artificial Intelligence Research (FAIR) foundation (WP2), PRIN project ConStRAINeD, PRA Project APRISE, and GNAMPA Project CUP\_E53C22001930001.

\bibliography{References/references}

\end{document}